\documentclass[5p, times]{elsarticle}

\usepackage{ecrc}

\volume{00}

\firstpage{1}

\journalname{Systems \& Control Letters}

\runauth{}
\jid{procs}
\jnltitlelogo{Systems \& Control Letters}
\CopyrightLine{2024}{Published by Elsevier Ltd.}

\usepackage{amssymb}
\usepackage{amsmath}
\usepackage{amsfonts}
\usepackage{xcolor}
\usepackage{amsthm}
\usepackage{mathtools}
\usepackage{appendix}
\usepackage{bm}
\usepackage{float}
\usepackage{hyperref}
\usepackage{tikz}
\usetikzlibrary{shapes.geometric, arrows.meta, positioning}

\tikzset{
  block/.style={
    draw, rectangle,
    minimum height=1.1cm,
    inner xsep=14pt,
    inner ysep=8pt,
    align=center,
    font=\normalsize
  },
  arrow/.style={-{Stealth[length=6pt]}, thick}
}

\newtheorem{theorem}{Theorem}

\newtheorem{example}{Example}
\newtheorem{corollary}{Corollary}
\newtheorem{lemma}{Lemma}
\newtheorem{proposition}{Proposition}

\usepackage[figuresright]{rotating}

\begin{document}

\begin{frontmatter}


\dochead{}

\title{Families of Control-Cost-Parametrized\\ Inverse-Optimal Universal Stabilizers}

\author[maeDept]{Miroslav Krsti\'c}
\author[eceDept]{Luke Bhan\corref{corresponding}
\cortext[corresponding]{Corresponding Author}
\tnotetext[funding]{The work of M. Krstic was funded by AFOSR grant FA9550-23-1-0535 and NSF grant ECCS-2151525. The work of L. Bhan is supported by
the U.S. Department of Energy (DOE) grant DE-SC0024386.}}

\address[maeDept]{Department of Mechanical and Aerospace Engineering, University of California San Diego, La Jolla, CA 92093-0411, USA}
\address[eceDept]{Department of Electrical and Computer Engineering, University of California San Diego, La Jolla, CA 92093-0411, USA}

\begin{abstract}
A classical universal stabilization formula offers the practitioner no design freedom: it is a single, parameter-free object. We introduce a cost-parametrized family of stabilizing feedback laws, where (1) the user chooses a function that serves as the running cost on control in an inverse-optimal cost functional, and (2) obtains, through a formula, a nonlinear ``expander'' of a pre-existing universal controller, which solves an infinite-horizon optimal control problem with a meaningful cost on the state. The cost-to-expander formula is a three-step construction, involving, inter alia,  cost differentiation and function inversion---overall, a nonlinear infinite-dimensional operator. The cost-to-expander operator is proven Lipschitz, which enables uniform neural operator approximation of the entire family and supports both offline performance exploration and online adaptation. Semiglobal practical asymptotic stability and second-order suboptimality bounds are established under the approximation. The operator learning and its use in semiglobal stabilization are illustrated numerically. We call the result `half-direct-optimal' because the paper's design is less than a general `direct optimal' (HJB-inducing) control, but more than the fully inverse optimal, since the user performs minimization for an arbitrary given cost on control. The dual to the half-direct problem we solve is the problem in which the cost on the state is arbitrary and given. This dual problem is easier and outside of the scope of the paper. 
\end{abstract}

\begin{keyword}
Inverse-Optimal Control, Universal stabilization, Neural Operators
\end{keyword}
\end{frontmatter}

\section{Introduction}

\paragraph{Background}
Sontag's formula \cite{SONTAG1989117} is a landmark result in nonlinear control: given any system admitting a control Lyapunov function, it produces an explicit, globally stabilizing feedback law. In this sense it is universal. Yet from the standpoint of control practice, this universality is of limited utility. The formula is a single object. It has no design parameters. The parameters that might exist within the CLF $V(x)$ are a part of a design process separate from the formula itself. The engineer who wishes to shape transient behavior, and is not a master of CLF (re-)design, has not a single knob to turn. The result is theoretically elegant but the practitioner is justified to be reluctant to take it from the theoretical realm into application.

On the opposite end of the spectrum from Sontag's formula is the regrettably insufficiently-known Curtis--Beard construction \cite{1310464,https://doi.org/10.1002/rnc.961}, which captures {\em all} inverse-optimally stabilizing controllers associated with a given CLF, though it does not offer a systematic approach for how to choose among the controllers. The Curtis--Beard formula can be viewed as a  generalization of Sontag's baseline. For the setting of our paper, single-input and disturbance-free, $\dot x=f(x)+g(x)u$, we rewrite the Curtis--Beard parametrization as
\begin{eqnarray}
u_{\rm CB}(x)=u_{\rm S}(x)
-\hat \pi(x)\,b(x) = -\pi(x) \, b(x)
\end{eqnarray}
where  $a=L_fV$, $b=L_gV$, $\pi_0=\dfrac{a+\sqrt{a^2+b^4}}{b^2}$, $u_{\rm S}= -\pi_0\,b$ denotes Sontag's formula, 
and the scalar gain adjustment $\hat \pi(x)$  
is chosen to satisfy
\begin{equation}
\hat \pi(x)\ge
\frac{\rho(x)-b(x)^2}
{\sqrt{a(x)^2+\rho(x)\,b(x)^2}+\sqrt{a(x)^2+b(x)^4}}\,,
\end{equation}
or, equivalently,
\begin{equation}
\pi \geq \frac{a+\sqrt{a^2+\rho \, b^2}}{b^2}\,.
\end{equation}
Thus, relative to the Sontag baseline $u_{\rm S}$, the designer is given {\em two} additional {\em functional} degrees of freedom
$(\rho, \hat \pi)$,
both with {\em vector} arguments and constrained.  
While this characterization is complete, it yields a design problem that is difficult to interpret and tune in practice: the effect of these functions on closed-loop transients and control effort is indirect and not readily predictable. A selection  requires extensive simulation across initial conditions and combinations of the design functions.
\paragraph{This paper's aim and results} The paper proposes a remedy to both the lack of design freedom's in Sontag's formula and the hard-to-use overabundant freedom of the Curtis--Beard formula.

We introduce a parametrization of a family of stabilizing feedback laws, indexed neither by a finite set of scalar parameters nor by two state-dependent functions as in Curtis--Beard, but by a single scalar function $\gamma$, which serves as the inverse-optimal cost  on the control input. Each admissible choice of $\gamma$ produces, through a three-step construction --- differentiation, an algebraic contraction, and function inversion --- a nonlinear expander $\kappa$, from which a stabilizing feedback law is assembled. The classical Sontag formula emerges as one special case, corresponding to $\kappa(s) = 2s$. The half-Sontag formula, stabilizing but not inverse-optimal as the full-Sontag formula, provides a natural nominal law that can be reshaped by varying $\gamma$. The practitioner's workflow is then concrete: select a family of functions $\gamma$, each encoding a different preference over control cost, generate the corresponding family of expanders $\kappa$, and choose among the resulting controllers on the basis of closed-loop performance.

The framework is, deliberately, both theoretical and practical in its ambitions. On the theoretical side, we establish that the map $\gamma \mapsto\kappa$  is Lipschitz on compact subsets of the relevant function space, and that the resulting family of controllers can be uniformly approximated by neural operators. On the practical side, this approximation result is not merely an existence statement: it means that a single trained neural operator can serve as a  surrogate for the entire parametric family. In real time, the neural operator can enable online controller selection and adaptation. Offline, the neural operator can serve for massive offline simulation testing of performance of feedback laws within the family, and enable the selection of those controllers that provide an adequate tradeoff between performance and control effort. The goal, in short, is to transform universal control: from an explicit embodiment of an existence result --- into an engineering tool, whose usage "at scale" is facilitated by machine learning.

A further property of the family deserves emphasis. All members of the universal formula family are endowed with not just stabilization but also with the property of inverse optimality, not observed as an afterthought, but built into the parametrization from the outset. Every controller in the family minimizes an infinite-horizon cost functional whose control penalty is precisely the function $\gamma$ the designer selects, with the knowledge of how aggressive or cautious he/she wishes the controller to be. 
In classical inverse-optimal design, one typically constructs a controller first and then identifies, if possible, a cost functional it minimizes; here the order is reversed, and the control cost comes first. The expander $\kappa$ and the feedback law are consequences. The result is a family in which performance shaping and optimality are not in tension but are two readings of the same object.

We refer to this formulation as ``half-direct-optimal'' since the (possibly non-quadratic) cost function on the control is general and given upfront, whereas the cost on the state is not general and, while meaningful, is the consequence of the control design and of the arbitrary cost on control. 

\paragraph{Related literature.}

The study of constructing controllers from CLFs is both vast and rich, beginning with the aforementioned Sontag's formula \cite{SONTAG1989117}, itself an explicit universal construction of Artstein's earlier stabilizability result \cite{ARTSTEIN19831163}. From here, there are too many extensions to mention, but we briefly highlight a few, including universal formulas for bounded and restricted inputs \cite{LIN1991393,LinSontag1995}, vectorized CLF's \cite{6519314}, integral-input-to-state stable (iISS) CLFs \cite{LIBERZON2002111}, and the nonsmooth/discontinuous feedback stabilization line \cite{633828}. More recently, there have been a series of results on safe stabilization \cite{9030225} and a recent general formula for an arbitrary number of affine control constraints is given in \cite{mestres2025neuralnetworkbaseduniversalformulas}.

Additionally, along a similar thread of thinking is the analysis of stabilizing controllers as inverse-optimal ones. Inverse optimal designs have been long studied dating back to the 1960s \cite{10.1115/1.3653115,1100365} with a major advancement in the robust design by Freeman-Kokotovi\'c \cite{doi:10.1137/S0363012993258732}. Since inverse-optimal design has been extended with local convergence rates \cite{5491091} as well as multiple different control paradigms such as input-to-state stability \cite{661589,LU2026112770}, discrete time systems \cite{ORNELASTELLEZ201438}, stochastic systems \cite{DENG1997151,DO201541}, adaptive control \cite{LI19971459}, reinforcement learning \cite{10.5555/3104322.3104366}, delayed systems \cite{CAI2019549}, PDE systems \cite{KARAFYLLIS2026106293}, safety filters \cite{10130642} and more recently, for strictly positive systems \cite{krstic2026contractorexpanderuniversalinverseoptimal}. The methodology has also been applied across several applications including aircraft tracking \cite{1532394,DO2021100132} and power networks \cite{JOUINI20235451}. 

A third strand of literature relevant to this work is operator learning, which enters here through the approximation of the expander mapping $\gamma \mapsto \kappa$. Neural operators, introduced in $1995$ in \cite{chen1995Universal} and later developed with modern neural network architectures in \cite{lu2021Learning, li2021Fourier}, provide a natural framework for approximating infinite-dimensional maps. Consequently, they have recently been used to replace costly implicit computations in feedback laws across several control settings, including PDE backstepping \cite{10374221, WANG2025112351, ZHANG2026112809}, adaptive control \cite{lamarque2025Adaptive}, delayed systems \cite{pmlr-v283-bhan25a} optimal control \cite{sewell2026neuraloperatorsmultitaskcontrol,li2026fnoanglethetaextendedfourier}, and applications such as traffic flow and oil drilling \cite{pmlr-v242-zhang24c, TOUMI2025106191}. 

\paragraph{Paper's specific contributions}
\begin{itemize}
\item \emph{A half-direct-optimal design paradigm for universal stabilization.}
We introduce a family of stabilizing feedback laws in which the practitioner prescribes a single scalar function $\gamma$ specifying the running cost on the control input. The constructed stabilizers are then optimal over an infinite-horizon  with control penalty $\gamma$ and an induced state cost $q(x)$, thereby allowing the aggressiveness of the controller to be shaped \emph{a priori} through the choice of $\gamma$.


\item \emph{Lipschitzness and learnability of the $\gamma \mapsto \kappa$ map.}
We prove that the infinite-dimensional operator mapping from control cost $\gamma$ to the corresponding expander $\kappa$ is Lipschitz on compact subsets of the admissible function space. Further, we invoke this regularity to guarantee arbitrarily accurate neural operator approximations of the expander operator which are needed for the practitioner's implementation. 

\item \emph{Stability and quantitative near-optimality under operator approximation.}
When the expander operator is uniformly approximated, we show that the resulting learned feedback law preserves semiglobal practical asymptotic stability. We further quantify the resulting suboptimality through a second-order distortion bound on the induced state-cost term and a finite-horizon optimality bound measuring the performance gap between the closed-loop trajectories of the approximate and exact half-direct-optimal controllers.\end{itemize}

\paragraph{Organization of the paper.}
The remainder of the paper is organized as follows. Section \ref{sec:contractor-expander-fcns} introduces the extractor and expander function along with their properties needed for the parametrization design. Section \ref{sec:half-direct-optimal-design} presents the corresponding half-direct-optimal stabilizing feedback framework. Section \ref{sec:lipschitz-approx} proves that the mapping from cost into expander function is Lipschitz and establishes a neural operator approximation theorem of this mapping over compact sets. Section \ref{sec:spas} contains the main robustness result when using the learned expander function in the parameterized stabilizing control law. Section \ref{sec:proofs} contains the proofs of the main theorem and Section \ref{sec:num} presents a numerical illustration of the design. 


\section{Contractor and Expander Functions}\label{sec:contractor-expander-fcns}

\begin{figure}
    \centering
\begin{tikzpicture}[node distance=.7cm]

  \node (input) {$\gamma$};

  \node (diff) [block, below=of input] {Differentiation \\ $\gamma'$};

  \node (alg)  [block, below=of diff]  {%
    \begin{tabular}{c}
      Algebraic mapping \\[4pt]
      $\displaystyle \Theta(s) = s - \frac{\gamma(s)}{\gamma'(s)}$
    \end{tabular}
  };

  \node (inv)  [block, below=of alg]  {Inversion \\ $\kappa = \Theta^{-1}$};

  \node (output) [below=of inv] {$\kappa$};

  \draw[arrow] (input) -- (diff);
  \draw[arrow] (diff)  -- (alg);
  \draw[arrow] (alg)   -- (inv);
  \draw[arrow] (inv)   -- (output);

\end{tikzpicture}
   
   \caption{Construction of the nonlinear {\em expander operator} $\mathcal K: \gamma \mapsto\kappa$  via differentiation of $\gamma$, algebraic operation $\Theta(s)=s-\gamma(s)/\gamma'(s)$, and inversion of function.}
    \label{fig:gamma-to-kappa}
\end{figure}

The following lemma introduces the contractor and expander functions and establishes their
basic properties, chief among them that $\Theta < \mathrm{Id}$ and $\kappa > \mathrm{Id}$
on $(0,\infty)$---which justify the nomenclature. The three-step construction is illustrated
in Figure~\ref{fig:gamma-to-kappa}.

\begin{lemma}\label{lem:Theta_kappa_Kinf}
Let $\gamma:[0,\infty)\to[0,\infty)$ be continuous on $[0,\infty)$, twice continuously differentiable on $(0,\infty)$, and satisfy $\gamma(0)=0$. Assume that $\gamma'\in \mathcal K_\infty$ and that
\begin{equation}\label{eq:gamma_dd_pos}
\gamma''(s)>0 \qquad \text{for all } s>0.
\end{equation}
Define the {\em contractor} function as
\begin{equation}\label{eq:Theta}
\Theta(s):=s-\frac{\gamma(s)}{\gamma'(s)}, \qquad s>0,
\end{equation}
and set $\Theta(0):=0$. Let the {\em expander} function be given by
\begin{equation}\label{eq:kappa}
\kappa:=\Theta^{-1}.
\end{equation}
Then both $\Theta$ and  $\kappa$ belong to class $ \mathcal K_\infty$. Additionally, $\kappa$ is equivalently given by the direct mapping from $\gamma$ as
\begin{equation}\label{eq:kappa_legendre}
\fbox{$\displaystyle \kappa(s)=(\ell\gamma)'\circ\left(\frac{\ell\gamma}{\mathrm{Id}}\right)^{-1}(s)$}\qquad s\geq 0\,,
\end{equation}
where $\ell$ denotes the Legendre-Fenchel transformation $\ell\gamma(s) = \int_0^s (\gamma'(\sigma))^{-1} d\sigma$ and ${\rm Id}(s)=s$.
\end{lemma}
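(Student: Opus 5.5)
The plan is to work directly with the derivative of $\Theta$, and then obtain the Legendre--Fenchel formula by a change of variables $s=\gamma'(\tau)$.

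\emph{Step 1: $\Theta$ is well defined and lies strictly between $0$ and the identity.} For $s>0$ we have $\gamma'(s)>0$, since $\gamma'\in\mathcal K_\infty$. Strict convexity of $\gamma$ together with $\gamma(0)=0$ gives
\begin{equation*}
0<\gamma(s)=\int_0^s\gamma'(\sigma)\,d\sigma<s\gamma'(s),
\end{equation*}
because $\gamma'$ is strictly increasing. Hence $0<\Theta(s)<s$, which also gives continuity at $0$. Positivity is exactly the convexity inequality $\gamma(0)>\gamma(s)+\gamma'(s)(0-s)$, i.e.\ $s\gamma'(s)>\gamma(s)$.

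\emph{Step 2: $\Theta$ is strictly increasing.} Differentiating on $(0,\infty)$ gives
\begin{equation*}
\Theta'(s)=1-\frac{\gamma'(s)^2-\gamma(s)\gamma''(s)}{\gamma'(s)^2}=\frac{\gamma(s)\gamma''(s)}{\gamma'(s)^2}>0 .
\end{equation*}
This uses \eqref{eq:gamma_dd_pos} and $\gamma(s)>0$.

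\emph{Step 3: $\Theta$ is radially unbounded.} This is the step I expect to be the main obstacle. The quantity $s-\gamma/\gamma'$ is the intercept of the tangent line of $\gamma$ at $s$ with the axis $\gamma=0$. A cleaner formula is
\begin{equation*}
\gamma'(s)\Theta(s)=s\gamma'(s)-\gamma(s)=\int_0^s\sigma\gamma''(\sigma)\,d\sigma=\int_0^s\bigl(\gamma'(s)-\gamma'(\sigma)\bigr)d\sigma .
\end{equation*}
Restricting the last integral to $\sigma\le s/2$ yields
\begin{equation*}
\Theta(s)\ge\frac{s}{2}\Bigl(1-\frac{\gamma'(s/2)}{\gamma'(s)}\Bigr).
\end{equation*}
This alone does not force $\Theta\to\infty$, because the ratio $\gamma'(s/2)/\gamma'(s)$ may tend to $1$, for instance when $\gamma'$ grows logarithmically. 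Instead I would fix $M>0$ and use that, for $s>M$,
\begin{equation*}
\gamma'(s)\Theta(s)\ge\int_0^M\bigl(\gamma'(s)-\gamma'(\sigma)\bigr)d\sigma\ge M\bigl(\gamma'(s)-\gamma'(M)\bigr).
\end{equation*}
Dividing by $\gamma'(s)$ and letting $s\to\infty$, where $\gamma'(s)\to\infty$, gives $\liminf_{s\to\infty}\Theta(s)\ge M$. Since $M$ is arbitrary, $\Theta\to\infty$. Thus $\Theta\in\mathcal K_\infty$, so $\kappa=\Theta^{-1}\in\mathcal K_\infty$ as well, and $\Theta<\mathrm{Id}$ gives $\kappa>\mathrm{Id}$.

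\emph{Step 4: the Legendre--Fenchel formula \eqref{eq:kappa_legendre}.} Set $\ell\gamma(s)=\int_0^s(\gamma')^{-1}$, so that $(\ell\gamma)'=(\gamma')^{-1}$. Substituting $\sigma=\gamma'(\tau)$ and integrating by parts gives the Young equality
\begin{equation*}
\ell\gamma(\gamma'(\tau))=\tau\gamma'(\tau)-\gamma(\tau).
\end{equation*}
Hence, with $r=\gamma'(\tau)$,
\begin{equation*}
\frac{\ell\gamma}{\mathrm{Id}}(r)=\tau-\frac{\gamma(\tau)}{\gamma'(\tau)}=\Theta(\tau),
\end{equation*}
that is, $\frac{\ell\gamma}{\mathrm{Id}}=\Theta\circ(\gamma')^{-1}$. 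This is a composition of $\mathcal K_\infty$ functions, so it is invertible, with
\begin{equation*}
\Bigl(\frac{\ell\gamma}{\mathrm{Id}}\Bigr)^{-1}=\gamma'\circ\kappa .
\end{equation*}
Composing with $(\ell\gamma)'=(\gamma')^{-1}$ gives
\begin{equation*}
(\ell\gamma)'\circ\Bigl(\frac{\ell\gamma}{\mathrm{Id}}\Bigr)^{-1}=(\gamma')^{-1}\circ\gamma'\circ\kappa=\kappa .
\end{equation*}
At $s=0$ both sides vanish by continuity.
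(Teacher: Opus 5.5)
Your proof is correct. Steps 1, 2 and 4 follow the paper's proof closely; the genuine difference is in how you show that $\Theta$ is radially unbounded.

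The paper argues by contradiction. If $\Theta\le M$, then $\gamma'/\gamma\le 1/(s-M)$ for large $s$. Integrating this gives at most linear growth of $\gamma$, which is incompatible with $\gamma'$ being increasing and unbounded. The paper leaves that last implication implicit; it follows from $\gamma(s)\ge \tfrac{s}{2}\gamma'(s/2)$.

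You instead use $\gamma'(s)\Theta(s)=\int_0^s(\gamma'(s)-\gamma'(\sigma))\,d\sigma$ to get the direct, quantitative bound $\Theta(s)\ge M\bigl(1-\gamma'(M)/\gamma'(s)\bigr)$ for every $M$ and every $s>M$. This avoids both the contradiction and the integration of a differential inequality. Your remark that the cruder bound with $\gamma'(s/2)/\gamma'(s)$ can fail (e.g.\ logarithmic $\gamma'$) correctly explains why fixing $M$ is needed.

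For the Legendre--Fenchel formula, the paper checks pointwise that $\Theta\bigl((\ell\gamma)'(\rho)\bigr)=s$ when $\rho=(\ell\gamma/\mathrm{Id})^{-1}(s)$. It uses Young's equality without derivation and takes for granted that $\ell\gamma/\mathrm{Id}$ is invertible. Your function-level identity $\ell\gamma/\mathrm{Id}=\Theta\circ(\gamma')^{-1}$ rests on the same Young equality, which you justify by substitution and integration by parts. It also gives the invertibility of $\ell\gamma/\mathrm{Id}$ for free, as a composition of $\mathcal K_\infty$ functions, so your version of that step is slightly more complete.
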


\begin{proof}
Since $\gamma'\in \mathcal K_\infty$ and $\gamma(0)=0$, we have
\begin{equation}\label{eq:gamma_int}
\gamma(s)=\int_0^s \gamma'(\sigma)\,d\sigma,
\end{equation}
hence $\gamma\in \mathcal K_\infty$. For every $s>0$, strict increase of $\gamma'$ gives
\begin{equation}\label{eq:gamma_bound0}
\gamma(s)=\int_0^s \gamma'(\sigma)\,d\sigma < s\,\gamma'(s),
\end{equation}
so
\begin{equation}\label{eq:Theta_pos}
0<\Theta(s)<s \qquad \text{for all } s>0.
\end{equation}
Moreover,
\begin{equation}\label{eq:Theta_der}
\Theta'(s)=\frac{\gamma(s)\,\gamma''(s)}{(\gamma'(s))^2}>0 \qquad \text{for all } s>0,
\end{equation}
so $\Theta$ is strictly increasing on $(0,\infty)$, and hence on $[0,\infty)$ with $\Theta(0)=0$. If $\Theta$ were bounded above, there would exist $M>0$ such that
\begin{equation}\label{eq:Theta_bound}
s-\frac{\gamma(s)}{\gamma'(s)}\le M
\end{equation}
for all large $s$, that is,
\begin{equation}\label{eq:gamma_ratio}
\frac{\gamma'(s)}{\gamma(s)}\le \frac{1}{s-M}
\end{equation}
for all large $s$. Integrating \eqref{eq:gamma_ratio} yields at most linear growth of $\gamma$, contradicting the fact that $\gamma'\in \mathcal K_\infty$ is increasing and unbounded. Hence $\Theta(s)\to\infty$ as $s\to\infty$. Therefore $\Theta$ is continuous, strictly increasing, positive definite, and unbounded, so $\Theta\in \mathcal K_\infty$, and by invertibility \eqref{eq:kappa}, $\kappa\in \mathcal K_\infty$. Finally, we  verify that \eqref{eq:kappa_legendre} implies \eqref{eq:Theta}, \eqref{eq:kappa}. Let $z=\kappa(s)$, with $\kappa$ defined by \eqref{eq:kappa_legendre}. Then there exists $\rho$ such that $\rho=\bigl(\ell\gamma/\mathrm{Id}\bigr)^{-1}(s)$ and $z=(\ell\gamma)'(\rho)$. Since $(\ell\gamma)'=(\gamma')^{-1}$, we have $\rho=\gamma'(z)$. Therefore
\begin{equation}
s=\frac{\ell\gamma(\rho)}{\rho}=\frac{\rho z-\gamma(z)}{\rho}=z-\frac{\gamma(z)}{\gamma'(z)}=\Theta(z),
\end{equation}
that is, $\Theta(\kappa(s))=s$. Since $\Theta$ is strictly increasing, it follows that $\kappa=\Theta^{-1}$. 
\end{proof}

\begin{figure*}[t]
    \centering
\includegraphics[width=0.85\linewidth]{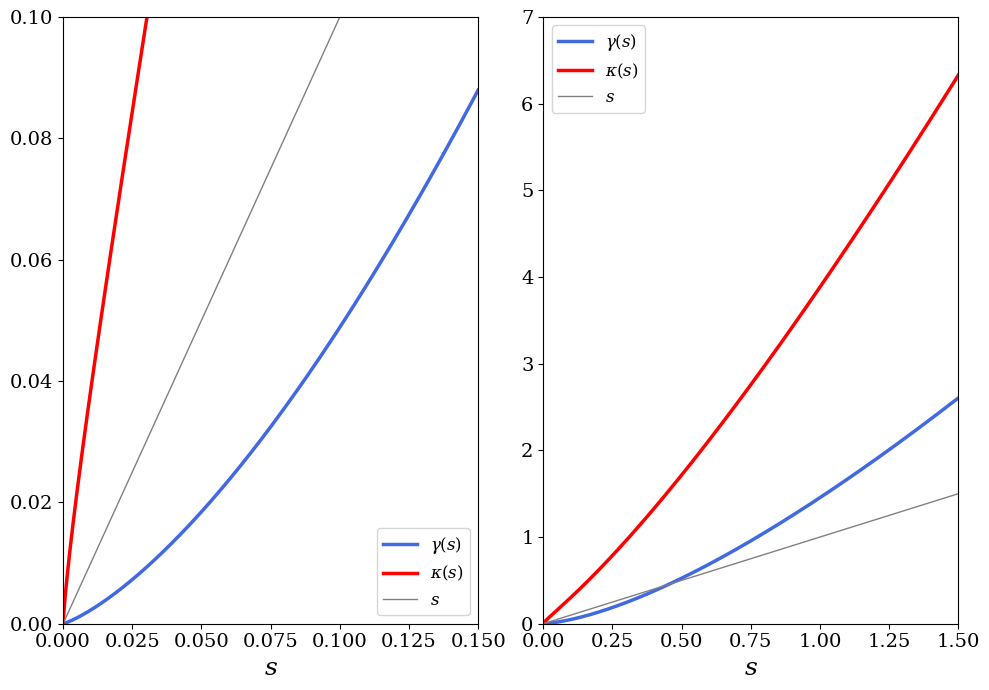}
\caption{Illustration of the expander from Example~\ref{ex1}. The left panel shows the flat
{\color{blue}$\gamma(s)\sim \frac{\alpha s}{\ln(1/s)}$}
resulting in the steep  
{\color{red}$\kappa(s)\sim s\ln(1/s)$} as $s\to 0^+$. 
The right panel shows 
{\color{blue}$\gamma(s)\sim s\ln s $} resulting in the superlinear  {\color{red}$\kappa(s)\sim s\ln s$} for large $s$. 
}   \label{fig:placeholder}
\end{figure*}

\begin{example}[\bf Locally and globally superlinear expander]\rm
\label{ex1}
To illustrate how the user can produce expanders that are not just linear and linear-in-portions functions, but truly nonlinear expanders, we give an example. The function $\gamma$ is chosen to induce superlinear behavior both for small and large $s$. Consider 
\begin{align}
\gamma(s)=s\left(\ln(1+s)+\frac{\alpha}{\ln(e+1/s)}\right),\qquad \alpha>0\,. 
\end{align}
Its own asymptotics are
\begin{equation}
\gamma(s)\sim \frac{\alpha s}{\ln(1/s)} \quad (s\to 0^+),
\qquad
\gamma(s)\sim s\ln s \quad (s\to\infty)\,, \label{eq:gamma-example-1}
\end{equation}
meaning that not only does $\gamma$ grow superlinearly for large $s$ but, near $0$, it dominates (it is ``less flat than'') every power $s^p$ with $p>1$.
After computing
\begin{equation}
    \Theta(s) = s - \frac{s\left(\ln(1+s) + \dfrac{\alpha}{A(s)}\right)}{\ln(1+s) + \dfrac{s}{1+s} + \alpha\left(\dfrac{1}{A(s)} +  \dfrac{1}{(es+1)A(s)^2} \right)}\,, 
\end{equation}
\begin{align}
\Theta(s)=\kappa^{-1}(s)
&= \frac{s^2 B(s)}
{\ln(1+s)+\dfrac{\alpha}{A(s)}+sB(s)}\,, \\ 
A(s) &:= \ln\left(e+\frac{1}{s}\right), \\
B(s) &:= \frac{1}{1+s}+\frac{\alpha}{(es+1)A(s)^2}, \
\end{align}
the  expander $\kappa$ satisfies
\begin{equation} \label{eq:kappa-example-1}
\kappa(s)\sim s\ln(1/s)\quad (s\to 0^+),
\qquad
\kappa(s)\sim s\ln s\quad (s\to\infty)\,,
\end{equation}
meaning  not only that $\kappa$ has a superlinear growth at infinity, but also at zero,
\begin{equation}
\kappa'(0^+)=+\infty\,.
\end{equation}
\mbox{}\hfill$\triangle$
\end{example}

\section{Control-Cost-Parametrized Half-Direct-Optimal Stabilizing Feedback Laws}
\label{sec:half-direct-optimal-design}
Given a CLF and a cost function $\gamma$ on the control input, the following theorem constructs an explicit stabilizing feedback law that is simultaneously the minimizer of an infinite-horizon cost functional parametrized by $\gamma$.

 \begin{theorem}[General half-direct-optimal stabilizer family]\label{thm:main}
Consider the control--affine system
\begin{equation}\label{eq:system}
\dot x = f(x) + g(x)u\,, \qquad x \in\mathbb{R}^n
\end{equation}
with a $C^1$ and proper function $V$ such that $V(0)=0$, $V(x)>0$ for all $x\neq 0$ 
and satisfying the Control Lyapunov Function (CLF) property
\begin{equation}\label{eq:clf}
L_g V(x) = 0,\ x \neq 0 \;\Longrightarrow\; L_f V(x)  
\leq -W(x)\,,
\end{equation}
where $W(x)$ is a continuous, positive definite, and proper function. Define the nominal feedback using the Freeman-Kokotovic minimum-norm formula \cite{doi:10.1137/S0363012993258732}
\begin{equation}\label{eq:u0}
u_0(x):=
\begin{cases}
-\dfrac{\max\{0,L_fV(x)+W(x)\}}{L_gV(x)}, & L_gV(x)\neq 0,\\[1ex]
0, & L_gV(x)=0\,.
\end{cases}
\end{equation}
Then, for the feedback
\begin{equation}\label{eq:ustar}
\fbox{$u^*(x) := \operatorname{sgn}(u_0(x)) \, \kappa\!\big(|u_0(x)|\big)$}
\end{equation}
where $\kappa$ is defined in \eqref{eq:kappa} for a function $\gamma$ satisfying the conditions in Lemma \ref{lem:Theta_kappa_Kinf}, the following results hold:
\begin{enumerate}
\item[(1)] the feedback $u^*(x)$ is the unique minimizer of the infinite-horizon cost functional
\begin{equation}\label{eq:J}
J(u) = \int_0^\infty \left[ q(x(t)) + 
r(x(t))\, 
\gamma\!\left(
|u(t)|\right) \right]\, dt,
\end{equation}

\item[(2)] the state cost and control weight are
\begin{equation}\label{eq:q}
q(x) = \max\{W(x),-L_fV(x)\} > 0
\end{equation}
\begin{equation}\label{eq:r}
r(x) =
\frac{|L_gV(x)|}
{\left(\dfrac{\ell\gamma}{\mathrm{Id}}\right)^{-1}
\!\left(
\dfrac{\max\{0,L_fV(x)+W(x)\}}{|L_gV(x)|}
\right)}
> 0
 \end{equation}
for all $x\neq 0$ and $q(0)=0$, $r(0)=+\infty$. 

\item[(3)] the minimal value is $V(x_0)$,   

\item[(4)] both $u^*(x)$ and the associated nominal control $u_0(x)$ are globally asymptotically stabilizing, and 

\item[(5)] both $u^*$ and $u_0$ are continuous in $x$ provided $(f,g,V,W)$ are such that $
\frac{L_fV(x)+W(x)}{L_gV(x)}\to 0$ as $ x\to x_0$ whenever $L_gV(x_0)=0$.
\end{enumerate}

\end{theorem}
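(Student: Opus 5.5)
The proof will follow the standard inverse-optimality route: show that $V$ solves the Hamilton--Jacobi--Bellman equation associated with \eqref{eq:J}. Completing the ``square'' with the Legendre--Fenchel transform will then show that $u^*$ is the pointwise minimizer of the Hamiltonian. Throughout write $a=L_fV$, $b=L_gV$, and $m=\max\{0,a+W\}$, so that $u_0=-m/b$ and $|u_0|=m/|b|$.

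\emph{Step 1 (HJB).} Consider the Hamiltonian $H(x,u)=q+r\gamma(|u|)+a+bu$. By convexity of $\gamma$, minimizing over $u$ gives $u=-\mathrm{sgn}(b)\,(\gamma')^{-1}(|b|/r)$, and the minimal value is $q+a-r\,\ell\gamma(|b|/r)$. Define $\rho:=|b|/r$. By \eqref{eq:r}, $\rho=(\ell\gamma/\mathrm{Id})^{-1}(|u_0|)$, so $\ell\gamma(\rho)/\rho=|u_0|$. By the formula \eqref{eq:kappa_legendre} of Lemma~\ref{lem:Theta_kappa_Kinf}, the minimizer has magnitude $(\gamma')^{-1}(\rho)=(\ell\gamma)'(\rho)=\kappa(|u_0|)$. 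Its sign is $-\mathrm{sgn}(b)=\mathrm{sgn}(u_0)$ whenever $m>0$, so the minimizer is exactly $u^*$ in \eqref{eq:ustar}. The minimal value becomes $q+a-|b|\,\ell\gamma(\rho)/\rho=q+a-m$. Split into cases. If $a+W\ge0$, then $q=W$ and the value is $W+a-(a+W)=0$. If $a+W<0$, then $m=0$, $q=-a$, and the value is $0$. Hence $\min_u H=0$, which is the HJB equation. Strict convexity of $\gamma$ (from $\gamma''>0$) gives uniqueness of the pointwise minimizer. On the set $b=0$, $x\neq0$, we have $a\le -W$, $u_0=0$ and $r=+\infty$, which forces $u=0=u^*$; this is handled as the degenerate branch. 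Positivity of $q$ and $r$ in item (2) follows from $W>0$ and $\kappa\in\mathcal K_\infty$.

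\emph{Step 2 (stability).} Along $u_0$, whenever $m>0$ we get $\dot V=a-m=-W$, and otherwise $\dot V=a\le -W$. For $u^*$, since $\kappa>\mathrm{Id}$ and the sign is that of $u_0$, we have $bu^*\le bu_0$, so $\dot V\le -W$ again. Properness of $V$ and $W$ then gives global asymptotic stability, which is item (4).

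\emph{Step 3 (optimality and value).} For any admissible $u$ with $x(t)\to0$, integrating $\dot V=a+bu$ and adding the HJB inequality $H(x,u)\ge0$ gives $J(u)=V(x_0)+\int_0^\infty H(x,u)\,dt\ge V(x_0)$. Equality holds iff $u=u^*$ almost everywhere, which yields items (1) and (3). The main obstacle is the restriction of admissible controls to those with $V(x(t))\to0$, or more simply those ensuring convergence. A secondary obstacle is the handling of $r=+\infty$ at points where $m=0$ or $x=0$. On the set where $a+W<0$ but $b\neq0$, $\rho=(\ell\gamma/\mathrm{Id})^{-1}(0)=0$ makes $r=+\infty$ as well. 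I would treat this with the convention $\infty\cdot\gamma(0)=0$, which pins $u=0$ there and is consistent with $u^*=0$.

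\emph{Step 4 (continuity).} Item (5): away from $\{b=0\}$, $u_0$ is continuous, and $u^*$ is continuous because $\kappa$ is continuous. At points with $b(x_0)=0$, the hypothesis gives $u_0\to0$, and since $\kappa(0)=0$ also $u^*\to0$.
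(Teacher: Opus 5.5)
Your proposal is correct and follows the paper's proof: you use the same Hamiltonian and the substitution $\sigma=|L_gV|/r$ with $\ell\gamma(\sigma)/\sigma=|u_0|$. You then apply the Legendre form of $\kappa$ from Lemma~\ref{lem:Theta_kappa_Kinf} to identify the pointwise minimizer with $\operatorname{sgn}(u_0)\,\kappa(|u_0|)$ and obtain $\min_u H=0$. The only differences are that you get $\dot V\le -W$ under $u^*$ from $\kappa>\mathrm{Id}$ (i.e.\ $L_gV\,u^*\le L_gV\,u_0$) rather than from the dissipation identity $\dot V=-q-r\gamma(|u^*|)$ (the two agree, since $r\,\gamma(\kappa(s))=|L_gV|\,(\kappa(s)-s)$), and that you spell out the integration argument for items (1) and (3), the $r=+\infty$ convention where $\max\{0,L_fV+W\}=0$, and the continuity item (5), all of which the paper leaves implicit.
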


\begin{proof}
First, define
\begin{equation}\label{eq:sigma}
\sigma(x) := \frac{|L_g V(x)|}{
{r(x)}}.
\end{equation}
By construction of $r(x)$ in \eqref{eq:r}, one has
\begin{equation}\label{eq:legendre_match}
\frac{\ell\gamma\big(\sigma(x)\big)}{\sigma(x)} = \frac{\max\{0,L_f V(x) + W(x)\}}{|L_gV(x)|},
\end{equation}
hence
\begin{equation}\label{eq:W_positive}
r(x)\ell\gamma\big(\sigma(x)\big) - L_f V(x) = q(x) > 0 \quad \text{for } x \neq 0.
\end{equation}
Define the Hamiltonian
\begin{equation}\label{eq:H}
H(x,u) = L_f V(x) + L_g V(x)\,u + q(x) + 
r(x) \gamma\!\big(|u|\big).
\end{equation}
The choice of $u^*(x)$ corresponds to the pointwise minimizer of $H$ with respect to $u$. Indeed, for fixed $x$, the map
\begin{equation}\label{eq:convex_map}
u \mapsto L_g V(x)\,u + 
r(x)\,
\gamma\!\big(|u|\big)
\end{equation}
is strictly convex, and its stationarity condition is
\begin{equation}\label{eq:stationarity}
0=L_g V(x)+\operatorname{sgn}(u)\,{r(x)}\,\gamma'\!\big(|u|\big).
\end{equation}
Thus, with
\begin{equation}\label{eq:ustar_magnitude}
|u^*(x)|=(\gamma')^{-1}\!\big(\sigma(x)\big)=(\ell\gamma)'\big(\sigma(x)\big),
\end{equation}
one gets
\begin{equation}\label{eq:ustar_explicit}
u^*(x)
=
-\operatorname{sgn}\!\big(L_gV(x)\big)\,
{(\ell\gamma)'\big(\sigma(x)\big)}
=
\operatorname{sgn}(u_0(x))\,\kappa\!\big(|u_0(x)|\big),
\end{equation}
where the last equality follows from
\begin{equation}\label{eq:u0_relation}
|u_0(x)|
=\frac{\max\{0,L_fV(x)+W(x)\}}{|L_gV(x)|}
=\frac{\ell\gamma(\sigma(x))}{\sigma(x)}
\end{equation}
and \eqref{eq:kappa_legendre}. 
Optimality of $u^*$ follows from the fact that \eqref{eq:convex_map} is strictly convex in $u$ and that $u^*$ satisfies the stationarity condition \eqref{eq:stationarity}, hence is the pointwise-in-$x$ minimizer of the Hamiltonian, 
\begin{equation}\label{eq:H_opt}
H(x,u)\ge H(x,u^*(x)) = 0\,.
\end{equation}
Substitution yields
\begin{equation}\label{eq:Vdot}
\dot V(x) = L_f V(x) + L_g V(x)\,u^*(x)
= -q(x) - 
r(x)\,
\gamma\!\big(|u^*(x)|\big) < 0
\quad \text{for } x \neq 0.
\end{equation}
Thus $V$ is a strict Lyapunov function under $u^*$, implying global asymptotic stability. For $u_0$, global asymptotic stability holds as well because 
\begin{equation}
\dot V(x)=
\begin{cases}
-\max\{W(x),-L_fV(x)\}, & L_gV(x)\neq 0,\\[4pt]
L_fV(x), & L_gV(x)=0.
\end{cases}
\end{equation}
\end{proof}

\bigskip

\begin{example}[\bf Enhancement of convergence rate with expander]\rm
\label{ex2}
We return to Example \ref{ex1}. We illustrate the effect of the nonlinear expander $\kappa$ (from Example \ref{ex1}) by considering stabilization of the system $\dot{x}=u$ using the nominal controller $u_0=-x$ and the modified controller $u^*=-\kappa(x)$, and comparing the resulting closed-loop dynamics $\dot{x}=-x$ and $\dot{x}=-\kappa(x)$ from a common initial condition $x_0$. The trajectories under $-\kappa(x)$ decay significantly faster across both large and small state regimes, which is quantified by the convergence time to a prescribed level $\varepsilon\in(0,1)$: 
\begin{equation}
T_{\rm lin}(x_0,\varepsilon)=\ln(x_0/\varepsilon)\,, \qquad 
T_\kappa(x_0,\varepsilon)
=
\ln\!\bigl(\ln x_0\,\ln(1/\varepsilon)\bigr)+O(1),
\end{equation}
where the latter is obtained from $T_\kappa(x_0,\varepsilon)=\int_\varepsilon^{x_0} d\xi/\kappa(\xi)$, 
showing that the dependence on both the initial condition and the target level is reduced from logarithmic to double-logarithmic order. In summary, $T_\kappa(x_0,\varepsilon)\le 2\ln T_{\rm lin}(x_0,\varepsilon)+O(1)$ as $T_{\rm lin}(x_0,\varepsilon)\to\infty$.
\mbox{}\hfill$\triangle$
\end{example}

\bigskip

Next, we apply the expander $\kappa$ to the half-Sontag formula, generalizing it beyond the classical full Sontag formula for $\kappa(s) = 2s$ and $\gamma(s) = s^2$. The cost on the state $q(x)$ remains unchanged relative to the inverse optimality of Sontag's formula, but the cost and weight on control change significantly. 

\begin{corollary}[From half-Sontag formula to an expander-parametrized family of universal formulas]\label{cor:half_sontag}
Let
\begin{equation}\label{eq:uS2}
u_{S/2}(x):=
\begin{cases}
-\dfrac{L_fV(x)+\sqrt{(L_fV(x))^2+|L_gV(x)|^4}}{2\,L_gV(x)}, & L_gV(x)\neq 0,\\[1.2ex]
0, & L_gV(x)=0\,,
\end{cases}
\end{equation}
where the $C^1$, positive definite, proper $V(x)$ satisfies the CLF condition $L_gV(x) \Rightarrow L_fV(x)<0$ whenever $x\neq 0$. Define
\begin{equation}\label{eq:uKS}
u^*_{\kappa(S/2)}(x) =
\operatorname{sgn}\!\big(u_{S/2}(x)\big)\,
\kappa\!\big(|u_{S/2}(x)|\big).
\end{equation}
Then all statements of Theorem~\ref{thm:main} hold with
\begin{equation}\label{eq:qS2}
q(x) =
\frac{|L_g V(x)|^4}{2\left( L_f V(x)+ \sqrt{(L_f V(x))^2 + |L_g V(x)|^4}
\right)},
\end{equation}
\begin{equation}
\label{eq:rS}
r(x)=
\frac{|L_gV(x)|}
{\left(\dfrac{\ell\gamma}{\mathrm{Id}}\right)^{-1}
\!\left(
\dfrac{L_fV(x)+\sqrt{(L_fV(x))^2+|L_gV(x)|^4}}{2|L_gV(x)|}
\right)}.
\end{equation}
If, in addition, $(f,g,V)$ satisfy the small-control property at $x=0$, namely, for every $\varepsilon>0$ there exists $\delta>0$ such that, whenever $|x|<\delta$, there exists $u$ with $|u|<\varepsilon$ and $L_fV(x)+L_gV(x)\,u<0$, then $u_{S/2}$ and $u^*_{\kappa(S/2)}$ are continuous in $x$.
\end{corollary}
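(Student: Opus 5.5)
The plan is to reduce the corollary to Theorem~\ref{thm:main} by choosing the function $W$ so that the Freeman--Kokotovi\'c nominal law $u_0$ in \eqref{eq:u0} coincides with $u_{S/2}$. Abbreviate $a=L_fV$, $b=L_gV$. For $b\neq 0$ we want $\max\{0,a+W\}/b=(a+\sqrt{a^2+b^4})/(2b)$. Since $a+\sqrt{a^2+b^4}>0$ whenever $b\neq 0$, this suggests defining
\begin{equation*}
W(x):=\frac{\sqrt{a^2+b^4}-a}{2}=\frac{b^4}{2\big(a+\sqrt{a^2+b^4}\big)}\,,
\end{equation*}
with the second form used when $b\neq0$. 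With this choice, $a+W=(a+\sqrt{a^2+b^4})/2>0$, so the $\max$ in \eqref{eq:u0} is inactive and $u_0=u_{S/2}$. For $b=0$, $x\neq0$, the definition gives $W=(|a|-a)/2=-a$ because $a<0$ by the CLF hypothesis. Hence the CLF inequality \eqref{eq:clf} holds with equality, and $q=\max\{W,-a\}$.

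Next compute $q$ and $r$ from \eqref{eq:q}--\eqref{eq:r}. Since $W-(-a)=(\sqrt{a^2+b^4}+a)/2\ge0$, we get $q=W$, which is exactly \eqref{eq:qS2}. Substituting $\max\{0,a+W\}/|b|=(a+\sqrt{a^2+b^4})/(2|b|)$ into \eqref{eq:r} gives \eqref{eq:rS}. The feedback \eqref{eq:ustar} then becomes \eqref{eq:uKS}, so claims (1)--(4) follow verbatim from Theorem~\ref{thm:main}.

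The main obstacle is that Theorem~\ref{thm:main} requires $W$ to be continuous, positive definite and proper. Continuity and $W(0)=0$ are clear from the first formula, since $V\in C^1$. For $x\ne0$, positivity follows because $\sqrt{a^2+b^4}>a$ unless $b=0$ and $a\ge0$, which the CLF hypothesis excludes. Properness may fail, however. I would argue that properness is not actually used in the proof of Theorem~\ref{thm:main}: that proof only needs $q>0$ for $x\neq0$, $V$ positive definite and proper, and $\dot V=-q-r\gamma(|u^*|)<0$. Global asymptotic stability then follows from the standard Lyapunov theorem with proper $V$ and negative definite $\dot V$, which is continuous wherever $u^*$ is. So I would either note that this part of the hypothesis may be dropped, or re-run the stability step directly using $\dot V=-W-r\gamma<0$.

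For the continuity statement, I would check the condition in item (5) of Theorem~\ref{thm:main}: $(a+W)/b=(a+\sqrt{a^2+b^4})/(2b)\to0$ as $x\to x_0$ with $b(x_0)=0$. At $x_0\neq0$ we have $a(x_0)<0$, so by rationalizing,
\begin{equation*}
\frac{a+\sqrt{a^2+b^4}}{2b}=\frac{b^3}{2(\sqrt{a^2+b^4}-a)}\to0\,.
\end{equation*}
At $x_0=0$, I would follow Sontag's classical argument. The small-control property gives $|u|<\varepsilon$ with $a+bu<0$, hence $a<|b|\varepsilon$. Splitting into the cases $a\le0$ and $0<a<|b|\varepsilon$, one bounds $|u_{S/2}|\le C(\varepsilon+|b|)$. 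Since $b(x)\to0$ as $x\to0$, this gives $u_{S/2}\to0$. Continuity of $u^*_{\kappa(S/2)}$ then follows because $\kappa\in\mathcal K_\infty$ is continuous with $\kappa(0)=0$, and $\operatorname{sgn}(u_{S/2})\kappa(|u_{S/2}|)$ is a continuous odd extension of $\kappa$ composed with the continuous $u_{S/2}$.
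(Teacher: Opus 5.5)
Your proposal is correct and follows the same route as the paper: choose $W=\tfrac12\bigl(\sqrt{(L_fV)^2+|L_gV|^4}-L_fV\bigr)$ so that the Freeman--Kokotovi\'c law \eqref{eq:u0} reduces to $u_{S/2}$, then read off $q=W$ and $r$ from Theorem~\ref{thm:main}. Your additional checks fill in points the paper's three-line proof leaves implicit: that this $W$ satisfies \eqref{eq:clf} with equality where $L_gV=0$; that its possible failure to be proper is harmless, since properness of $W$ is never used in the proof of Theorem~\ref{thm:main}; and that item (5) holds, via rationalization at $x_0\neq0$ and Sontag's small-control bound $|u_{S/2}|\le\varepsilon+|L_gV|/2$ at $x_0=0$.
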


\begin{proof}
The half-Sontag choice corresponds to selecting
\begin{equation}\label{eq:W_half}
W(x) =
\frac{\sqrt{(L_f V(x))^2 + |L_g V(x)|^4} - L_f V(x)}{2},
\end{equation}
which, substituted into the Freeman-Kokotovic formula \eqref{eq:u0} results in 
\begin{equation}
\max\{0,L_fV+W\}
=
\frac{L_fV+\sqrt{(L_fV)^2+|L_gV|^4}}{2}.
\end{equation}
Substituting \eqref{eq:W_half} into Theorem~\ref{thm:main} yields exactly \eqref{eq:uS2}, \eqref{eq:qS2}, and \eqref{eq:rS}.
\end{proof}

\bigskip
Next, we make two remarks.

\paragraph{State-Cost-Parametrized Half-Direct Optimal Control.} In addition to the control-cost-parametrized half-direct optimal control problem \eqref{eq:J} addressed in this paper, one may consider its state-cost-parametrized dual, in which the running cost
\begin{equation}
\label{eq-half-dir-state}
J(u) = \int_0^\infty \left[ q(x(t)) + 
r(x(t))\,u(t)^2 \right]\,dt
\end{equation}
is prescribed through a nonnegative function $q$, and a feedback law is sought that is inverse optimal for this cost for some $r>0$. Restricting attention to scalar input and feedback laws of the form
\begin{equation}
\label{eq-LgV-dual}
u(x) = - \pi (x)\, b(x), \qquad b(x) = L_g V(x),
\end{equation}
the problem reduces pointwise to the algebraic mapping
\begin{equation}
(a, b, q) \;\longmapsto\; r\,,
\end{equation}
where $a = L_f V$, subject to the feasibility condition
\begin{equation}
q(x) \geq \max\{0, - 4 a(x)\}\,,
\end{equation}
and with the induced quantities given explicitly by
\begin{eqnarray}
\label{eq-FK-gain}
\pi &=& 2\frac{a+q}{ b^2}>0
\\
\label{eq-FK-control-weight}
r &=& \frac{1}{2\pi} = \frac{ b^2}{4(a+ q)}>0.
\end{eqnarray}
All relations are understood pointwise in the state. 
The feedback \eqref{eq-LgV-dual}, \eqref{eq-FK-gain} is of the Freeman-Kokotovic (FK) form but, unlike the original min-norm control which is only pointwise optimal, \eqref{eq-LgV-dual}, \eqref{eq-FK-gain} doubles the FK formula and is infinite-horizon half-direct optimal. 
Thus, in the formulation that is dual to the half-direct optimal problem solved in the rest of the paper, the design problem we mention reduces to the selection of a single function $q(x)$ subject to a simple pointwise-in-$x$ inequality constraint, with the resulting feedback gains obtained by direct algebraic evaluation at each $x$. While this formulation offers a broader space of admissible control law choices, the design entails no operator structure (in addition to offering limited guidance on how the chosen function $q$ affects closed-loop performance). For this reason, and in contrast to the ``control-side'' formulation developed in this paper, which gives rise to design via operator mapping from a scalar function $\gamma$ to a feedback law, the ``state-side'' half-direct problem is less interesting and not pursued here.

\paragraph{HJB PDEs explicitly solved by CLF $V(x)$.} Both of the half-direct optimal control formulations avoid the need to solve the HJB PDE, since the starting CLF $V(x)$ is already the solution, in the case of \eqref{eq:J}, of the PDE
\begin{equation}
L_fV(x)-r(x)\ell\gamma\!\left(\frac{|L_gV(x)|}{{r(x)}}\right) +q(x)=0\,
\end{equation}
for a given $\gamma(s)$, and in the case of \eqref{eq-half-dir-state} of the PDE
\begin{equation}
L_fV(x)  - \frac{(L_gV(x))^2}{4\,r(x)}+ q(x)=0\,
\end{equation}
for a given $q(x)$. 

\section{Lipschitzness and Neural Approximation of the Mapping $\gamma \mapsto \kappa$} \label{sec:lipschitz-approx}

For the family of controllers in Theorem \ref{thm:main} to be useful at scale --- whether for offline exploration of the design space or for real-time adaptation --- the map $\gamma \mapsto \kappa$ must be well-behaved as $\gamma$ varies. We establish that it is Lipschitz, and that this regularity is sufficient to guarantee uniform neural operator approximation over compact families of cost functions.

\begin{lemma}\label{lem:kappa_lipschitz}
Let $K=[\underline z,\overline z]\subset(0,\infty)$ and let $\mathcal G\subset C^2(K)$ be compact in $C^1(K)$. Assume that there exist constants $\mu>0$, $M>0$, and $m>0$ such that, for every $\gamma\in\mathcal G$ and every $z\in K$,
\begin{eqnarray}
\gamma(z) &\le& M, \label{eq:gamma_bound}\\
\gamma'(z) &\ge& \mu, \label{eq:Dgamma_bound}\\
\frac{\gamma(z)\,\gamma''(z)}{(\gamma'(z))^2} &\ge& m. \label{eq:nondegeneracy}
\end{eqnarray}
Then the map 
\begin{equation}
\mathcal{K}:\gamma\mapsto \kappa
\end{equation}
where $\kappa$ is defined by \eqref{eq:kappa_legendre}, or, alternatively, by \eqref{eq:kappa}
, \eqref{eq:Theta}, is Lipschitz from $C^1(K)$ into $L^\infty(\Theta(K))$. More precisely, for any $\gamma_1,\gamma_2\in\mathcal G$,
\begin{equation}\label{eq:kappa_lip}
\boxed{\displaystyle\|\kappa_1-\kappa_2\|_{L^\infty(\Theta(K))}
\le
\frac{1}{\mu m}
\left(
\|\gamma_1-\gamma_2\|_{L^\infty(K)}
+
\frac{M}{\mu}\|\gamma'_1-\gamma'_2\|_{L^\infty(K)}
\right)}
\end{equation}
\end{lemma}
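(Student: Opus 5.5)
The plan is to prove the bound by a standard inverse-function perturbation argument. First compare the contractors $\Theta_1,\Theta_2$ pointwise on $K$. Then transfer the estimate to the inverses $\kappa_i=\Theta_i^{-1}$ using the uniform lower bound on $\Theta_i'$ supplied by \eqref{eq:nondegeneracy}. By \eqref{eq:Theta_der} in Lemma~\ref{lem:Theta_kappa_Kinf}, $\Theta_i'(z)=\gamma_i(z)\gamma_i''(z)/(\gamma_i'(z))^2\ge m$ on $K$, so each $\Theta_i$ is strictly increasing with slope at least $m$ there. Consequently $|\Theta_i(z)-\Theta_i(w)|\ge m|z-w|$ for $z,w\in K$.

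\emph{Step 1: contractor difference.} For $z\in K$,
\begin{equation*}
\Theta_1(z)-\Theta_2(z)=\frac{\gamma_2(z)}{\gamma_2'(z)}-\frac{\gamma_1(z)}{\gamma_1'(z)}
=\frac{\gamma_2-\gamma_1}{\gamma_2'}+\gamma_1\frac{\gamma_1'-\gamma_2'}{\gamma_1'\gamma_2'}.
\end{equation*}
Using \eqref{eq:gamma_bound} and \eqref{eq:Dgamma_bound}, we get
\begin{equation*}
\|\Theta_1-\Theta_2\|_{L^\infty(K)}\le \frac{1}{\mu}\|\gamma_1-\gamma_2\|_{L^\infty(K)}+\frac{M}{\mu^2}\|\gamma_1'-\gamma_2'\|_{L^\infty(K)}.
\end{equation*}
This is exactly the bracket in \eqref{eq:kappa_lip} multiplied by $1/\mu$.

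\emph{Step 2: transfer to inverses.} Fix $s$ in the common range on which both $\kappa_i(s)\in K$. Set $z_i=\kappa_i(s)$, so that $\Theta_1(z_1)=\Theta_2(z_2)=s$. Then
\begin{equation*}
m|z_1-z_2|\le|\Theta_1(z_1)-\Theta_1(z_2)|=|\Theta_2(z_2)-\Theta_1(z_2)|\le\|\Theta_1-\Theta_2\|_{L^\infty(K)}.
\end{equation*}
The first inequality uses the mean value theorem for $\Theta_1$ on the segment between $z_1$ and $z_2$, which lies in $K$ by convexity of the interval. Dividing by $m$ and combining with Step 1 gives \eqref{eq:kappa_lip}.

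\emph{Main obstacle.} The delicate point is the domain. The set $\Theta(K)$ in the statement depends on $\gamma$, so $\Theta_1(K)\neq\Theta_2(K)$ in general. For $s\in\Theta_1(K)$, the point $\kappa_2(s)$ may therefore leave $K$, and then neither the slope bound nor the supremum in Step 1 is available. I would first try reading $\Theta(K)$ as the intersection $\Theta_1(K)\cap\Theta_2(K)$, which is what makes both $z_i\in K$ and the argument clean. If a statement on all of $\Theta_1(K)$ is wanted, I would instead extend the argument as follows. When $z_2\notin K$, say $z_2>\overline z$, use monotonicity to compare with the endpoint: $\Theta_2(\overline z)<s=\Theta_1(z_1)$. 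Then $m(\overline z-z_1)$ is controlled by $\Theta_1(\overline z)-\Theta_2(\overline z)$, which bounds the distance from $z_1$ to the boundary but not $|z_1-z_2|$ itself. This is why I expect the intersection interpretation, or a mild enlargement of $K$ on which the hypotheses hold, to be necessary.

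Compactness of $\mathcal G$ in $C^1(K)$ plays no role in the estimate itself. It only guarantees that the constants are uniform and that $\mathcal K(\mathcal G)$ is compact, which is needed for the subsequent neural operator approximation.
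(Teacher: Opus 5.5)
Your proposal is correct and follows the paper's proof essentially line by line. You bound $\|\Theta_1-\Theta_2\|_{L^\infty(K)}\le \mu^{-1}\|\gamma_1-\gamma_2\|_{L^\infty(K)}+M\mu^{-2}\|\gamma_1'-\gamma_2'\|_{L^\infty(K)}$ (your algebraic splitting differs only trivially from the paper's) and then transfer this to the inverses by the mean value theorem with $\Theta_1'\ge m$. Your domain caveat is well taken: when the paper fixes $s\in\Theta(K)$, it tacitly needs $z_1,z_2\in K$, both for $\Theta_1'(\xi)\ge m$ and for bounding $|\Theta_2(z_2)-\Theta_1(z_2)|$ by the supremum over $K$, so what the argument actually establishes is the estimate on $\Theta_1(K)\cap\Theta_2(K)$, exactly your reading.
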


\begin{proof}
For $i=1,2$, let
\begin{equation}\label{eq:Theta_i}
\Theta_i(z)=z-\frac{\gamma_i(z)}{\gamma'_i(z)},\qquad
\kappa_i=\Theta_i^{-1}.
\end{equation}
By \eqref{eq:nondegeneracy},
\begin{equation}\label{eq:Dtheta_bound}
\Theta'_i(z)=\frac{\gamma_i(z)\,\gamma''_i(z)}{(\gamma'_i(z))^2}\ge m>0
\quad \text{for all } z\in K,
\end{equation}
so each $\Theta_i$ is strictly increasing on $K$. Fix $s\in \Theta(K)$ and set $z_i=\kappa_i(s)$. Then
\begin{equation}\label{eq:theta_equal}
\Theta_1(z_1)=\Theta_2(z_2),
\end{equation}
and hence
\begin{equation}\label{eq:theta_diff}
\Theta_1(z_1)-\Theta_1(z_2)=\Theta_2(z_2)-\Theta_1(z_2).
\end{equation}
By the mean value theorem, there exists $\xi$ between $z_1$ and $z_2$ such that
\begin{equation}\label{eq:mvt}
\Theta'_1(\xi)(z_1-z_2)=\Theta_2(z_2)-\Theta_1(z_2).
\end{equation}
Using \eqref{eq:Dtheta_bound},
\begin{equation}\label{eq:z_bound}
|z_1-z_2|
\le
\frac{1}{m}\|\Theta_1-\Theta_2\|_{L^\infty(K)}.
\end{equation}
Taking the supremum over $s\in \Theta(K)$ yields
\begin{equation}\label{eq:kappa_theta}
\|\kappa_1-\kappa_2\|_{L^\infty(\Theta(K))}
\le
\frac{1}{m}\|\Theta_1-\Theta_2\|_{L^\infty(K)}.
\end{equation}
From \eqref{eq:Theta_i},
\begin{equation}\label{eq:theta_diff2}
\Theta_1(z)-\Theta_2(z)
=
-\frac{\gamma_1(z)}{\gamma'_1(z)}+\frac{\gamma_2(z)}{\gamma'_2(z)}.
\end{equation}
Therefore,
\begin{equation}\label{eq:theta_est1}
|\Theta_1(z)-\Theta_2(z)|
\le
\frac{|\gamma_1(z)-\gamma_2(z)|}{|\gamma'_1(z)|}
+
|\gamma_2(z)|
\left|
\frac{1}{\gamma'_1(z)}-\frac{1}{\gamma'_2(z)}
\right|.
\end{equation}
Using \eqref{eq:Dgamma_bound} and \eqref{eq:gamma_bound},
\begin{equation}\label{eq:theta_est2}
|\Theta_1(z)-\Theta_2(z)|
\le
\frac{1}{\mu}|\gamma_1(z)-\gamma_2(z)|
+
\frac{M}{\mu^2}|\gamma'_1(z)-\gamma'_2(z)|.
\end{equation}
Taking the supremum over $z\in K$ and combining with \eqref{eq:kappa_theta} yields \eqref{eq:kappa_lip}.
\end{proof}

\begin{proposition}\label{thm:operator_uat}
Let $K=[\underline z,\overline z]\subset(0,\infty)$ and let $\mathcal G\subset C^2(K)$ be compact in $C^1(K)$. Under the assumptions of Lemma~\ref{lem:kappa_lipschitz}, for every $\varepsilon>0$ there exists a neural operator $\widehat{\mathcal K}$ such that
\begin{equation}\label{eq:uat}
\sup_{\gamma\in\mathcal G}
\|\widehat{\mathcal K}(\gamma)-\mathcal K(\gamma)\|_{L^\infty(\Theta(K))}
<\varepsilon.
\end{equation}
\end{proposition}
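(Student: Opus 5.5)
The plan is to combine Lemma~\ref{lem:kappa_lipschitz} with a universal approximation theorem for neural operators, such as the Chen--Chen theorem \cite{chen1995Universal} or the DeepONet/FNO versions \cite{lu2021Learning,li2021Fourier}. Those theorems say that a continuous operator from a compact subset of a Banach space of functions into a space of continuous functions can be approximated uniformly over that compact set.

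A technical point must be settled first. The target domain $\Theta(K)$ depends on $\gamma$. I will interpret $\Theta(K)$ as a fixed compact interval $S\subset(0,\infty)$ on which every $\kappa=\mathcal K(\gamma)$, $\gamma\in\mathcal G$, is defined. A natural choice is
\[
S=\bigcap_{\gamma\in\mathcal G}\Theta_\gamma(K)=\Big[\max_{\gamma\in\mathcal G}\Theta_\gamma(\underline z),\ \min_{\gamma\in\mathcal G}\Theta_\gamma(\overline z)\Big].
\]
The endpoint maps $\gamma\mapsto\Theta_\gamma(\underline z)$ and $\gamma\mapsto\Theta_\gamma(\overline z)$ are continuous in the $C^1(K)$ topology, because $\gamma'\ge\mu$ by \eqref{eq:Dgamma_bound}. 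Compactness of $\mathcal G$ therefore makes the max and min attained. This is the same implicit convention already used in Lemma~\ref{lem:kappa_lipschitz}, where $z_i=\kappa_i(s)$ must both lie in $K$.

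The key steps are as follows.

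\textbf{(i) Continuity of $\mathcal K$.} By \eqref{eq:kappa_lip}, $\mathcal K:(\mathcal G,\|\cdot\|_{C^1(K)})\to C(S)$ is Lipschitz, hence continuous. Each $\kappa$ is continuous on $S$ as the inverse of the continuous, strictly increasing $\Theta$.

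\textbf{(ii) Compactness of the image.} Since $\mathcal G$ is compact in $C^1(K)$ and $\mathcal K$ is continuous, $\mathcal K(\mathcal G)$ is compact in $C(S)$.

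\textbf{(iii) Apply the universal approximation theorem.} I apply it with the input space $C^1(K)$ and the compact input set $\mathcal G$. If the chosen neural-operator theorem is stated for inputs in $C(K)$, I would feed the pair $(\gamma,\gamma')$ as the input function. The set $\{(\gamma,\gamma'):\gamma\in\mathcal G\}$ is compact in $C(K)^2$, and $\mathcal K$ is a continuous function of this pair, again by \eqref{eq:kappa_lip}. The theorem yields $\widehat{\mathcal K}$ with
\[
\sup_{\gamma\in\mathcal G}\ \sup_{s\in S}\big|\widehat{\mathcal K}(\gamma)(s)-\mathcal K(\gamma)(s)\big|<\varepsilon,
\]
which is \eqref{eq:uat}, since the $L^\infty$ norm coincides with the sup norm for continuous functions.

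The main obstacle is matching hypotheses rather than estimation. I need a precise form of the universal approximation theorem whose input topology is $C^1$, or else the $(\gamma,\gamma')$ encoding. I also need the output domain to be common to all $\gamma$. If the intersection $S$ above is degenerate (empty), I would extend each $\kappa$ continuously to a fixed interval by constant extension at the endpoints. This preserves the Lipschitz estimate and keeps the argument intact.
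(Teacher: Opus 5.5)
Your proposal is correct and takes the paper's route: the paper's proof is just the one-line combination of the Lipschitz estimate of Lemma~\ref{lem:kappa_lipschitz} with a neural-operator universal approximation theorem (citing DeepONet), and your treatment of the $\gamma$-dependent output domain and of the input topology fills in details the paper leaves implicit (the $(\gamma,\gamma')$ encoding is not actually needed, since the $C^0$ and $C^1$ topologies coincide on the $C^1$-compact set $\mathcal G$). If you read $\Theta(K)$ in \eqref{eq:uat} as $\Theta_\gamma(K)$, use your clamped extension on the hull $\bigl[\min_{\gamma\in\mathcal G}\Theta_\gamma(\underline z),\,\max_{\gamma\in\mathcal G}\Theta_\gamma(\overline z)\bigr]$ in all cases, not only when $S$ is empty; the clamped inverse still satisfies the $1/m$ bound and agrees with $\kappa$ on each $\Theta_\gamma(K)$, so this yields that stronger reading directly.
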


\begin{proof}
The conclusion follows from Lemma~\ref{lem:kappa_lipschitz} and a universal approximation theorem for neural operators on compact subsets of $C^1(K)$ \cite{lu2021Learning}.
\end{proof}

\section{Stability under Approximated Half-Direct-Optimal Feedback}\label{sec:spas}

Naturally, the half-direct-optimal feedback law then becomes 
\begin{align}
    \hat{u}^\ast(x) :=&\; \mathrm{sgn}(u_0(x)) \hat{\kappa}(|u_0(x)|)\,, \label{eq:approximate-feedback-law}\\
    \hat{\kappa}(\cdot) =&\; \widehat{\mathcal{K}}(\gamma)(\cdot)
\end{align}
where $\hat{\kappa}$ is the learned neural approximator. 

We now aim to answer two questions: What type of stability is achieved under the approximated feedback map and, equally important, is the corresponding law practically inverse optimal? 
\begin{theorem}[Stability and half-direct optimality under the neural operator learned controller] 
\label{thm:semi-global}
Let the assumptions of Lemma~\ref{lem:kappa_lipschitz} and Theorem~\ref{thm:main} hold. Consider the neural operator approximation satisfying Proposition~\ref{thm:operator_uat} with the approximated feedback law $\hat{u}^\ast$ as in \eqref{eq:approximate-feedback-law}. 
Then the following hold.

\begin{enumerate}
        \item \textbf{Semiglobal practical asymptotic stability.}
    For every pair $R_0>\delta>0$, define the level set $\mathcal S_R:=\{x:V(x)\le R\}$ where $R := \alpha_2(R_0)$ and $\alpha_2 \in \mathcal{K}_\infty$ is such that $V(x) \leq \alpha_2(|x|)$. 
    Then there exist $\varepsilon^\ast(R_0,\delta)>0$ and $\beta_{R_0,\delta}\in\mathcal{KL}$ such that, if the initial condition is constrained to the ball of radius $R_0$:
    \begin{align}
     x(0)\in \mathbb{B}_{R_0}\,, \qquad \mathbb{B}_{R_0} = \{x \in \mathbb{R}^n:|x| \le R_0\}\,, 
    \end{align}
    and the approximation $\widehat{\mathcal{K}}$ guaranteed to exist by Proposition \ref{thm:operator_uat} satisfies the uniform error  
    \begin{align}
         \varepsilon\in(0,\varepsilon^\ast(R_0,\delta)),
    \end{align}
    then, for all $t \ge 0$, the corresponding closed-loop trajectory remains in $\mathcal S_R$ and satisfies
    \begin{align}
    \boxed{| x(t)| \le \max\bigl\{\beta_{R_0,\delta}(| x(0)|,t),\,\delta\bigr\}}
    \end{align}

        \item \textbf{Induced state-cost perturbation.}
    Let $x \in \mathcal S_R$ be any fixed state, $q=W$ denote the nominal state cost from Theorem~\ref{thm:main}, and define the induced state cost associated with $\hat{u}^\ast$ by
    \begin{align}\label{eq:qhat_def}
        \hat q(x):=
        -L_fV(x)-L_gV(x)\,\hat u^\ast(x)
        -
        \gamma\!\bigl(\sqrt{r(x)}\,|\hat u^\ast(x)|\bigr).
    \end{align}
    Then, $\hat q$ is a second-order perturbation of the nominal state cost $q$. Namely, there exists $C_R>0$ such that
    \begin{align}\label{eq:cost_perturbation_bound}
        \sup_{x\in\mathcal S_R}\bigl|q(x)-\hat q(x)\bigr|
        \le C_R\,\varepsilon^2.
    \end{align}
\end{enumerate}
\end{theorem}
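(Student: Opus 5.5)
Part 1 is handled by treating $\hat u^\ast$ as a perturbation of $u^\ast$ and running the Lyapunov argument with $V$ on the compact set $\mathcal S_R$. Along solutions of the perturbed closed loop,
\begin{align*}
\dot V = L_fV + L_gV\,u^\ast + L_gV\,(\hat u^\ast - u^\ast) \le -q(x) - r(x)\gamma(|u^\ast(x)|) + |L_gV(x)|\,|\hat\kappa(|u_0(x)|)-\kappa(|u_0(x)|)|,
\end{align*}
where \eqref{eq:Vdot} is used for the nominal part. The sign factors in $u^\ast$ and $\hat u^\ast$ coincide, so $|\hat u^\ast-u^\ast| = |\hat\kappa(|u_0|)-\kappa(|u_0|)|$.

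On $\mathcal S_R$, $V$ is proper and $C^1$, and $u_0$ is continuous (item (5) of Theorem~\ref{thm:main}) or at least locally bounded. Hence $|L_gV|\le c_R$ and $|u_0(x)|$ lies in a compact set. I will take $K$, and hence $\Theta(K)$, large enough to contain that range, so that Proposition~\ref{thm:operator_uat} gives $|\hat u^\ast-u^\ast|\le \varepsilon$ uniformly there. Near $|u_0|=0$, which lies outside $\Theta(K)$ because $K\subset(0,\infty)$, I restrict attention to the annulus where $|x|\ge \delta'$. Then
\begin{align*}
\dot V \le -q(x) + c_R\,\varepsilon \quad\text{on } \mathcal S_R .
\end{align*}

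Let $\underline q:=\min\{q(x): \delta'\le|x|,\ x\in\mathcal S_R\}>0$, and choose $\varepsilon^\ast$ so that $c_R\varepsilon^\ast<\underline q/2$. This gives $\dot V\le -q/2$ outside the ball $\mathbb B_{\delta'}$, so $\mathcal S_R$ is forward invariant. Since $\mathbb B_{R_0}\subset\mathcal S_R$ because $V\le\alpha_2(|x|)$, and $\mathcal S_R$ is compact, solutions exist for all $t\ge 0$. Trajectories then enter the sublevel set $\{V\le \max_{|x|\le\delta'}V\}$ in finite time and remain there. I choose $\delta'$ so that this sublevel set lies within $\mathbb B_\delta$, using $\alpha_1(|x|)\le V$. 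The standard comparison lemma applied to $\dot V\le-\alpha(V)$ outside that set then produces $\beta_{R_0,\delta}\in\mathcal{KL}$.

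The main obstacle is the behavior near the origin, where $|u_0|\to0$ leaves the domain of approximation $\Theta(K)$. I will handle it by having the approximation guarantee apply only on $|x|\ge\delta'$, and by using a crude bound inside $\mathbb B_{\delta'}$: $\hat\kappa$ bounded and $L_gV\to0$. That bound only needs to keep trajectories within the $\delta$-ball.

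For part 2, write $\hat u^\ast = u^\ast + e$ with $|e|\le\varepsilon$. Read with the control weight as in the Hamiltonian $H$, the induced cost satisfies $\hat q(x) = -H_0(x,\hat u^\ast)$ plus terms already in $q$, where $H_0(x,u)=L_gV\,u+r\gamma(|u|)$ (reconciling the $\gamma(\sqrt r\,\cdot)$ normalization amounts to a redefinition of $r$). Therefore
\begin{align*}
q - \hat q = H_0(x,\hat u^\ast) - H_0(x,u^\ast).
\end{align*}
By \eqref{eq:stationarity}, $u^\ast$ is a stationary point of the strictly convex map \eqref{eq:convex_map}, so the first-order term vanishes. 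A second-order Taylor expansion then gives
\begin{align*}
|q-\hat q|\le \tfrac12\, r(x)\sup\gamma''\,\varepsilon^2,
\end{align*}
with the supremum taken over the compact range of $|u|$ on $\mathcal S_R$. The constant $C_R$ is $\tfrac12\sup_{\mathcal S_R} r\cdot\sup\gamma''$. Bounding $r$ on $\mathcal S_R$ again needs care near $x=0$, where $r=+\infty$. There I expect to use the identity $r\gamma''(|u^\ast|)\cdot e^2$ in terms of $|L_gV|/\gamma'(|u^\ast|)$ and the bounds \eqref{eq:Dgamma_bound}–\eqref{eq:nondegeneracy} to keep the product finite.
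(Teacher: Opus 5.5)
Your route is the paper's in both parts. For Part 1 you bound $L_gV(\hat u^\ast-u^\ast)$ by $\sup_{\mathcal S_R}|L_gV|\,\varepsilon$, pick $\varepsilon^\ast$ against a positive lower bound on $W$ or $q$ away from a small ball, and conclude by comparison. For Part 2 you use stationarity \eqref{eq:stationarity} plus a second-order Taylor expansion. However, the one step where you go beyond the paper does not hold: restricting to the annulus $|x|\ge\delta'$ does not keep $|u_0(x)|$ inside $\Theta(K)\subset[\Theta(\underline z),\infty)$. By \eqref{eq:u0}, $u_0$ vanishes at every $x\neq0$ with $L_gV(x)=0$. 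For a single input with $n\ge2$, such points generically form a hypersurface through the origin that reaches $\partial\mathcal S_R$. By continuity of $u_0$, $|u_0|<\Theta(\underline z)$ on a neighborhood of that hypersurface, and $|u_0|$ is nonzero there wherever $L_fV+W>0$. For the half-Sontag choice of Corollary~\ref{cor:half_sontag} this holds everywhere off $\{L_gV=0\}$. At such points Proposition~\ref{thm:operator_uat} says nothing about $\hat\kappa(|u_0|)$, so $\dot V\le-q+c_R\varepsilon$ is unjustified on your annulus. Shrinking $\delta'$ does not help, and neither does enlarging $K$, because \eqref{eq:Dgamma_bound} together with $\gamma'(0)=0$ forces $\underline z>0$. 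The neighborhood of the origin, by contrast, is not the real obstacle. Forward invariance of $\{V\le\max_{|y|\le\delta'}V\}$ only requires $\dot V<0$ where $V$ exceeds that level, and there $|x|>\delta'$ automatically. The same set defeats your Part 2 constant. Bounding $r=|L_gV|/\gamma'(|u^\ast|)$ through \eqref{eq:Dgamma_bound} requires $|u^\ast|=\kappa(|u_0|)\in K$. For $\gamma(s)=s^2$ in the half-Sontag case, $r\gamma''=|L_gV|/|u^\ast|\approx 2|L_fV|/|L_gV|^2$, which is unbounded near $\{L_gV=0\}\setminus\{0\}$.

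The paper's own proof does not address this either; it simply asserts $|\hat u^\ast-u^\ast|\le\varepsilon$ on all of $\mathcal S_R$. If you adopt that as a standing hypothesis, your argument coincides with the paper's and the annulus is superfluous. To close the gap honestly, exploit the structure of the feedback. Where $u_0(x)=0$, the sign factor gives $\hat u^\ast=u^\ast=0$ (with $\operatorname{sgn}(0)=0$). Where $u_0(x)\neq0$, the identity $L_gV\,u_0=-(L_fV+W)$ yields
\[
\dot V=-W(x)-|L_gV(x)|\bigl(\hat\kappa(|u_0(x)|)-|u_0(x)|\bigr).
\]
So it suffices that the learned map retain the expander property $\hat\kappa(s)\ge s$ on $(0,\Theta(\underline z))$, or that $\hat u^\ast$ revert to $u_0$ there. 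Your $\varepsilon$-argument is then used only where $|u_0|\in\Theta(K)$. Correspondingly, the $\varepsilon^2$ bound of Part 2 holds on $\{x\in\mathcal S_R:|u_0(x)|\in\Theta(K)\}$ (and trivially where $u_0=0$), not on all of $\mathcal S_R$. Finally, your reading of $\hat q$ with $r\gamma(|u|)$ is the right one, since that is the form whose stationarity condition is \eqref{eq:stationarity}. Say so directly rather than calling it a redefinition of $r$ in $\gamma(\sqrt r\,|u|)$, which is not available for non-power $\gamma$.
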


Theorem~\ref{thm:semi-global} shows that the learned feedback $\hat u^\ast$ preserves the stabilizing behavior of the nominal half-direct-optimal law on a semiglobal $\epsilon$ practical set. Second, it provides insight into the \emph{pointwise half-direct-optimality distortion estimate}. More precisely, although the control-law error is of order $\mathcal{O}(\varepsilon)$, for any fixed state, the induced state-cost distortion satisfies $q(x)-\hat q(x)=\mathcal{O}(\varepsilon^2)$. Note, this distortion is a direct reflection of the Hamiltonian residual error incurred by using the learned control $\hat{u}^\ast$ instead of the minimizer $u^\ast$. The proportionality constant $C_R$ depends on both $r(x)$ and the curvature $\gamma''$, and can become large in regions where the control weight $r(x)$ is high and $\gamma$ is sharply curved, so choosing a $\gamma$ with a moderately growing second derivative keeps $C_R$ manageable and ensures the quadratic suppression of the approximation error is genuinely beneficial. Lastly, we note this does not specify the specific difference of the entire trajectory from $x_0$ following $\hat{u}^\ast$. To do so, one must analyze the finite-horizon optimality gap, since, due to SPAS in Theorem \ref{thm:semi-global}, the infinite-horizon cost-function $J$ can be infinite. Hence, for the practitioner, we provide the following finite horizon result:

\begin{proposition}[Finite-horizon near-optimality with terminal penalty] \label{corr:finite-horizon}
Fix $R\in(0,R^\ast]$ and $T>0$. Let $x^\ast(\cdot)$ and $\hat x(\cdot)$ denote the closed-loop trajectories generated by $u^\ast$ and $\hat u^\ast$, respectively, with the same initial condition $x_0\in\mathcal S_R$. Define
\begin{align}
    J_T(u^\ast;x_0)
    &:=
    \int_0^T
    \Big(
        q(x^\ast(t))
        +
        \gamma\!\bigl(\sqrt{r(x^\ast(t))}\,|u^\ast(x^\ast(t))|\bigr)
    \Big)\,dt\,, \\
     J_T(\hat u^\ast;x_0)
    &:=
    \int_0^T
    \Big(
        q(\hat x(t))
        +
        \gamma\!\bigl(\sqrt{r(\hat x(t))}\,|\hat u^\ast(\hat x(t))|\bigr)
    \Big)\,dt.
\end{align}
Then
\begin{align}
    \bigl[J_T(\hat u^\ast;x_0)+V(\hat x(T))\bigr]
    -
    \bigl[J_T(u^\ast;x_0)+V(x^\ast(T))\bigr]
    \le
    C_R\,\varepsilon^2\,T.
\end{align}
\end{proposition}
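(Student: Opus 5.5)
The plan is a standard value-function telescoping argument that uses $V$ as the exact value and the Hamiltonian identity from the proof of Theorem~\ref{thm:main}. Write $\ell(x,u):=q(x)+\gamma(\sqrt{r(x)}\,|u|)$ for the running cost appearing in $J_T$. The identity I need is
\begin{align}
\hat q(x)+\gamma\bigl(\sqrt{r(x)}\,|\hat u^\ast(x)|\bigr)=-L_fV(x)-L_gV(x)\,\hat u^\ast(x)=-\dot V\big|_{\hat u^\ast},
\end{align}
which is just the definition \eqref{eq:qhat_def} rearranged. With $\hat u^\ast$ replaced by $u^\ast$ and $\hat q$ by $q$, the same identity is the Hamiltonian equation $H(x,u^\ast(x))=0$ from \eqref{eq:H_opt}, in the cost form used by $J_T$. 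Hence along $x^\ast(\cdot)$ we have $\ell(x^\ast,u^\ast)=-\tfrac{d}{dt}V(x^\ast)$, and integrating over $[0,T]$ gives $J_T(u^\ast;x_0)+V(x^\ast(T))=V(x_0)$ exactly.

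For the learned trajectory I add and subtract $\hat q$:
\begin{align}
\ell(\hat x,\hat u^\ast(\hat x))=\bigl(q(\hat x)-\hat q(\hat x)\bigr)-\tfrac{d}{dt}V(\hat x(t)).
\end{align}
Integrating gives $J_T(\hat u^\ast;x_0)+V(\hat x(T))=V(x_0)+\int_0^T\bigl(q(\hat x)-\hat q(\hat x)\bigr)\,dt$. Subtracting the two identities cancels $V(x_0)$, so the gap equals $\int_0^T(q-\hat q)(\hat x(t))\,dt$.

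To bound this integral by part 2 of Theorem~\ref{thm:semi-global}, I need $\hat x(t)\in\mathcal S_R$ for all $t\in[0,T]$. I intend to get this from part 1 of Theorem~\ref{thm:semi-global}, which asserts that trajectories remain in $\mathcal S_R$ for $\varepsilon<\varepsilon^\ast$. Here $R\le R^\ast$ with $x_0\in\mathcal S_R$, so I will choose $R_0$ with $\mathcal S_R\subset\mathbb B_{R_0}$ and use the corresponding invariant level set. If that inclusion does not match exactly, I fall back on a direct argument: on the boundary of $\mathcal S_R$ we have $\dot V\le -q+C_R\varepsilon^2<0$ for $\varepsilon$ small, since $q$ is bounded below on that compact boundary away from the origin. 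That makes $\mathcal S_R$ forward invariant.

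With invariance in hand, the integrand is bounded pointwise by $C_R\varepsilon^2$, so the gap is at most $C_R\varepsilon^2 T$. The main obstacle is not the telescoping but keeping the conventions consistent. The cost in $J_T$ uses $\gamma(\sqrt r\,|u|)$, while Theorem~\ref{thm:main} uses $r\gamma(|u|)$. I must make sure the "$q$" here is the one for which $V$ satisfies the corresponding HJB identity with $u^\ast$. This is exactly the convention in which part 2 of Theorem~\ref{thm:semi-global} is stated, so I will take $q$ to be defined through $H(x,u^\ast)=0$ in the $\gamma(\sqrt r\,\cdot)$ form, and rely on that theorem's constant $C_R$.
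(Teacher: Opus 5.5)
Your proposal is correct and follows the paper's proof essentially step for step. The exact dissipation identity along $x^\ast$ gives $J_T(u^\ast;x_0)+V(x^\ast(T))=V(x_0)$. Adding and subtracting $q$ in the induced identity along $\hat x$ leaves the residual $\int_0^T(q-\hat q)(\hat x(t))\,dt$. Forward invariance of $\mathcal S_R$ together with part 2 of Theorem~\ref{thm:semi-global} then bounds this residual by $C_R\varepsilon^2T$.

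The only difference is in the invariance step. The paper simply asserts forward invariance of $\mathcal S_R$ under $\hat u^\ast$, citing part 1. Your boundary estimate $\dot V\le -q+C_R\varepsilon^2<0$ on $\partial\mathcal S_R$ is a slightly more careful justification for the specific $R$ in the statement.
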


\section{Proofs of Theorem \ref{thm:semi-global} and  Proposition \ref{corr:finite-horizon}} \label{sec:proofs}
\subsection{Proof of Theorem \ref{thm:semi-global}}
\begin{proof}
\textbf{Part (1)}
Since $V$ and $W$ are continuous, positive definite, and proper, there exist $\alpha_3,\alpha_4\in\mathcal \mathcal K_\infty$ such that
\begin{align}
\alpha_3(|x|)\le V(x), \qquad \alpha_4(|x|)\le W(x), \qquad \forall x\in\mathbb R^n.
\end{align}
Further, using Proposition \ref{thm:operator_uat}, on the set $\mathcal{S}_{R}$, we have
\begin{align}
|\hat u^\ast(x)-u^\ast(x)|\le \varepsilon, \qquad \forall x\in\mathcal S_R.
\end{align}
Since $L_gV$ is continuous and $\mathcal S_R$ is compact, define
\begin{align}
M_{R_0}:=\sup_{x\in\mathcal S_R}|L_gV(x)|<\infty.
\end{align}
Along the closed-loop trajectory $ x(\cdot)$, adding and subtracting $L_gVu^\ast(x)$, by definition of $W$, and using $V(x) \leq \alpha_2(|X|)$ yields
\begin{align}
\nonumber \dot V(x)
&=L_fV(x)+L_gV(x)\hat u^\ast(x) \\
\nonumber&=\bigl(L_fV(x)+L_gV(x)u^\ast(x)\bigr)
+L_gV(x)\bigl(\hat u^\ast(x)-u^\ast(x)\bigr) \\
\nonumber&=-W(x)-
\gamma\!\bigl(\sqrt{r(x)}\,|u^\ast(x)|\bigr)
+L_gV(x)\bigl(\hat u^\ast(x)-u^\ast(x)\bigr) \\
\nonumber&\le -W(x)+|L_gV(x)|\,|\hat u^\ast(x)-u^\ast(x)| \\
\nonumber&\le -W(x)+M_{R_0}\varepsilon \\
\nonumber&\le -\alpha_4(|x|)+M_{R_0}\varepsilon \\ 
&\le -\alpha_4(\alpha_2^{-1}(V(\hat{x}))) + M_{R_0}\varepsilon\,. 
\end{align}
By choosing $\varepsilon^\ast$ such that
\begin{align}
\varepsilon^\ast(R_0,\delta):=
\frac{1}{2M_{R_0}}
\min\bigl\{\alpha_4(\alpha_2^{-1}(\alpha_3(\delta))),\,\alpha_4(R_0)\bigr\}.
\end{align}
then, whenever $\varepsilon \in (0, \varepsilon^\ast)$ and  $V(x)\ge \alpha_3(\delta)$, we have 
\begin{align}
\dot V(x)
& \nonumber \le -\alpha_4(\alpha_2^{-1}(V(x)))+M_{R_0}\varepsilon \\
& \nonumber \le -\alpha_4(\alpha_2^{-1}(V(x)))
+\frac{1}{2}\alpha_4(\alpha_2^{-1}(\alpha_3(\delta))) \\
&\le -\frac{1}{2}\alpha_4(\alpha_2^{-1}(V(x))), \label{eq:lyapunov-bound}
\end{align}
where the last inequality follows from the monotonicity of $\alpha_4\circ\alpha_2^{-1}$.
Moreover, on the boundary of $\mathcal S_R$, that is, when $V(x)=\alpha_2(R_0)$, we have
\begin{align}
\dot V(x)
&\le -\alpha_4(R_0)+M_{R_0}\varepsilon \nonumber \\
&<0.
\end{align}
Hence $\mathcal S_R$ is forward invariant.
By the comparison lemma and \eqref{eq:lyapunov-bound}, there exists $\widetilde\beta\in\mathcal{KL}$ such that
\begin{align}
V(x(t))
\le
\max\bigl\{\widetilde\beta(V(x(0)),t),\,\alpha_3(\delta)\bigr\},
\qquad \forall t\ge 0.
\end{align}
Using
\begin{align}
V(x(0))\le \alpha_2(|x(0)|),
\qquad
\alpha_3(|x(t)|)\le V(x(t)),
\end{align}
it follows that
\begin{align}
|x(t)|
&\le
\alpha_3^{-1}
\Bigl(
\max\bigl\{\widetilde\beta(\alpha_2(|x(0)|),t),\,\alpha_3(\delta)\bigr\}
\Bigr) \\
&=
\max\Bigl\{
\alpha_3^{-1}\bigl(\widetilde\beta(\alpha_2(|x(0)|),t)\bigr),
\,\delta
\Bigr\}.
\end{align}
Therefore, defining
\begin{align}
\beta_{R_0,\delta}(r,t):=
\alpha_3^{-1}\bigl(\widetilde\beta(\alpha_2(r),t)\bigr),
\end{align}
we obtain
\begin{align}
|x(t)|\le \max\bigl\{\beta_{R_0,\delta}(|x(0)|,t),\,\delta\bigr\},
\qquad \forall t\ge 0.
\end{align}
This completes the result.

\textbf{Part (2).}
Fix $x\in\mathcal S_R$. Since $q=W$ by Theorem~\ref{thm:main}, and since the Hamiltonian identity $H(x,u^\ast(x))=0$ yields
\begin{align}
    q(x)
    =
    -L_fV(x)-L_gV(x)\,u^\ast(x)
    -
    \gamma\!\bigl(\sqrt{r(x)}\,|u^\ast(x)|\bigr),
\end{align}
subtracting~\eqref{eq:qhat_def} gives
\begin{align}
    q(x)-\hat q(x)
    =
    \varphi_x(\hat u^\ast(x))-\varphi_x(u^\ast(x)),
\end{align}
where
\begin{align}
    \varphi_x(u)
    :=
    L_gV(x)\,u+
    \gamma\!\bigl(\sqrt{r(x)}\,|u|\bigr).
\end{align}
Since $u^\ast(x)$ minimizes $\varphi_x$, the stationarity condition~\eqref{eq:stationarity} gives
\begin{align}
    \varphi_x'(u^\ast(x))=0.
\end{align}
Moreover, $\varphi_x$ is $C^2$ and strictly convex. Hence, by Taylor's theorem, there exists $\xi(x)$ between $u^\ast(x)$ and $\hat u^\ast(x)$ such that
\begin{align}
    q(x)-\hat q(x)
    =
    \frac12\,r(x)\,\gamma''\!\bigl(\sqrt{r(x)}\,|\xi(x)|\bigr)\,
    \bigl(\hat u^\ast(x)-u^\ast(x)\bigr)^2\,. 
\end{align}
Since $|\hat u^\ast(x)-u^\ast(x)|\le\varepsilon$ on $\mathcal S_R$ and $\mathcal S_R$ is compact, we obtain 
\begin{align}
    \sup_{x\in\mathcal S_R}|q(x)-\hat q(x)|
    \le&\; C_R\,\varepsilon^2\,, \\
    C_R :=&\; \sup_{x \in \mathcal{S}_R} \frac{1}{2}r(x)\gamma''\left(\sqrt{r(x)}|\xi(x)|\right)\,. 
\end{align}
\end{proof}

\subsection{Proof of Proposition \ref{corr:finite-horizon}}
\begin{proof}
For the exact half-direct-optimal controller $u^\ast$, Theorem~\ref{thm:main} gives the dissipation identity
\begin{align}
    \dot V(x^\ast(t))
    =
    -q(x^\ast(t))
    -
    \gamma\!\bigl(\sqrt{r(x^\ast(t))}\,|u^\ast(x^\ast(t))|\bigr).
\end{align}
Integrating from $0$ to $T$ yields
\begin{align}
    V(x^\ast(T)) - V(x_0)
    =
    -\int_0^T
    \Big(
        q(x^\ast(t))
        +
        \gamma\!\bigl(\sqrt{r(x^\ast(t))}\,|u^\ast(x^\ast(t))|\bigr)
    \Big)\,dt,
\end{align}
that is,
\begin{align}
    J_T(u^\ast;x_0)+V(x^\ast(T)) = V(x_0).
\end{align}
For the learned controller $\hat u^\ast$, the induced dissipation identity gives
\begin{align}
    \dot V(\hat x(t))
    =
    -\hat q(\hat x(t))
    -
    \gamma\!\bigl(\sqrt{r(\hat x(t))}\,|\hat u^\ast(\hat x(t))|\bigr).
\end{align}
Adding and subtracting $q(\hat x(t))$ on the right-hand side,
\begin{align}
    \dot V(\hat x(t))
    &=
    -q(\hat x(t))
    -
    \gamma\!\bigl(\sqrt{r(\hat x(t))}\,|\hat u^\ast(\hat x(t))|\bigr)
    +\bigl(q(\hat x(t))-\hat q(\hat x(t))\bigr).
\end{align}
Integrating from $0$ to $T$ gives
\begin{align}
    V(\hat x(T)) - V(x_0)
    =&\;
    -\int_0^T
    \Big(
        q(\hat x(t))
        +
        \gamma\!\bigl(\sqrt{r(\hat x(t))}\,|\hat u^\ast(\hat x(t))|\bigr)
    \Big)\,dt
    \nonumber \\ &\;+\int_0^T \bigl(q(\hat x(t))-\hat q(\hat x(t))\bigr)\,dt.
\end{align}
Hence,
\begin{align}
    J_T(\hat u^\ast;x_0)+V(\hat x(T))
    =
    V(x_0)+\int_0^T \bigl(q(\hat x(t))-\hat q(\hat x(t))\bigr)\,dt.
\end{align}

Since $\hat x(0)\in\mathcal S_R$ and $\mathcal S_R$ is forward invariant under $\hat u^\ast$, we have $\hat x(t)\in\mathcal S_R$ for all $t\in[0,T]$. Therefore, by \eqref{eq:cost_perturbation_bound},
\begin{align}
    \int_0^T \bigl(q(\hat x(t))-\hat q(\hat x(t))\bigr)\,dt
    \le
    \int_0^T C_R\,\varepsilon^2\,dt
    =
    C_R\,\varepsilon^2\,T.
\end{align}
Thus,
\begin{align}
    J_T(\hat u^\ast;x_0)+V(\hat x(T))
    \le
    V(x_0)+C_R\,\varepsilon^2\,T.
\end{align}
Subtracting the identity for $u^\ast$ gives
\begin{align}
    \bigl[J_T(\hat u^\ast;x_0)+V(\hat x(T))\bigr]
    -
    \bigl[J_T(u^\ast;x_0)+V(x^\ast(T))\bigr]
    \le
    C_R\,\varepsilon^2\,T.
\end{align}
\end{proof}

\section{Numerical Examples}
\label{sec:num}

\subsection{Control Design}

Consider the unicycle dynamics governed by
\begin{subequations}
\label{eq:unicycle-system}
\begin{align}
    \dot{x} =\;& v \cos(\theta)\,, \\
    \dot{y} =\;& v\sin(\theta)\,, \\
    \dot{\theta} =\;& \omega\,, 
\end{align}
\end{subequations}
where $(x, y) \in \mathbb{R}^2$ is the position of the unicycle in Cartesian coordinates, $\theta \in \mathbb{R}$ is the heading angle, $v$ is the forward velocity input, and $\omega$ is the angular velocity input. In \cite{todorovski2026nonholonomicrobotparkingfeedback, kim2025nonholonomicrobotparkingfeedback}, it was shown that, by transforming the unicycle into polar coordinates, one can obtain a strict CLF (using the bi-directional backstepping design in \cite[Eq. (37), $k_1=k_2=k_3=1$]{kim2025nonholonomicrobotparkingfeedback}) in the form:
\begin{align}
    V(\rho,\delta,\vartheta)
    =
    \rho^2+\delta^2+
    \left(\vartheta+\frac{1}{2}\arctan(2\delta)\right)^2.
\end{align}
where the dynamics in polar coordinates are
\begin{subequations}
\begin{align}
    \dot\rho &= -v\cos\vartheta, \\
    \dot\delta &= \frac{v}{\rho}\sin\vartheta, \\
    \dot\vartheta &= \frac{v}{\rho}\sin\vartheta - \omega .
\end{align}
\end{subequations}
Let
\begin{align}
z:=\vartheta+\frac12\arctan(2\delta).
\end{align}
Then the Lie derivatives are
\begin{align}
L_{g_1}V(\rho,\delta,\vartheta)
&=
-2\rho^2\cos\vartheta
+2\sin\vartheta\left[
\delta+\left(1+\frac{1}{1+4\delta^2}\right)z
\right],\\
L_{g_2}V(\rho,\delta,\vartheta)
&=
-2z.
\end{align}

Consider the general $\gamma$ parameterized feedback law,
 where the physical inputs are given by 
 \begin{align}
     v^\ast(\rho, \delta, \varphi) &= \rho \operatorname{sgn} (-L_{g_1}V(\rho, \delta, \varphi)) \kappa(|L_{g_1}V(\rho, \delta, \varphi)|)\,,  \\ 
     \omega^\ast(\rho, \delta, \varphi) &=  \operatorname{sgn} (-L_{g_2}V(\rho, \delta, \varphi)) \kappa(|L_{g_2}V(\rho, \delta, \varphi)|)\,.
 \end{align}
The corresponding infinite-horizon cost is
\begin{align}
J^\ast
&=
\int_0^\infty
\Bigg[q(\rho,\delta,\vartheta) + r_1(\rho, \delta, \varphi)\gamma\left(\frac{|v|}{\rho} \right)
+
r_2(\rho, \delta, \varphi) \gamma\!\left(|\omega|
\right)
\Bigg]dt,
\end{align}
where
\begin{align}
q(\rho,\delta,\vartheta)
&= (L_{g_1}V)^2 + (L_{g_2}V)^2\,, \\
r_1(\rho,\delta,\vartheta)
&=
\frac{\bigl|L_{g_1}V(\rho,\delta,\vartheta)\bigr|}
{
(\frac{\ell\gamma}{\rm Id})^{-1}\!\left(
\,\bigl|L_{g_1}V(\rho,\delta,\vartheta)\bigr|
\right)},\\
r_2(\rho,\delta,\vartheta)
&=
\frac{\bigl|L_{g_2}V(\rho,\delta,\vartheta)\bigr|}
{
(\frac{\ell\gamma}{\rm Id})^{-1}\!\left(\,\bigl|L_{g_2}V(\rho,\delta,\vartheta)\bigr|
\right)
}.
\end{align}

Second, as a numerical example, we additional provide the explicit control design for the half-Sontag formula in Corollary \ref{cor:half_sontag} where $\gamma(s) =s^2$ and $\kappa(s) = 2s$ is given explicitly. Hence, we obtain
\begin{align} \label{eq:nominal-1}
v_{\rm s/2}^\ast(\rho,\delta,\vartheta)
&=
-\rho \,L_{g_1}V(\rho,\delta,\vartheta) \nonumber \\
&= 2\rho^3\cos\vartheta
-2\rho\sin\vartheta\left[
\delta+\left(1+\frac{1}{1+4\delta^2}\right)z
\right],\\
\omega_{\rm s/2}^\ast(\rho,\delta,\vartheta)
&=
-L_{g_2}V(\rho,\delta,\vartheta)
=
2z. \label{eq:nominal-2}
\end{align}
The half-Sontag law is inverse optimal for the quadratic cost
\begin{align}
J_{\rm S/2}
&=
\int_0^\infty
\left[
q(\rho, \delta, \varphi)
+\left(\frac{v}{\rho}\right)^2
+\omega^2
\right]dt\,. 
\end{align}

\begin{figure*}[!t]
    \centering
    \includegraphics[width=\textwidth]{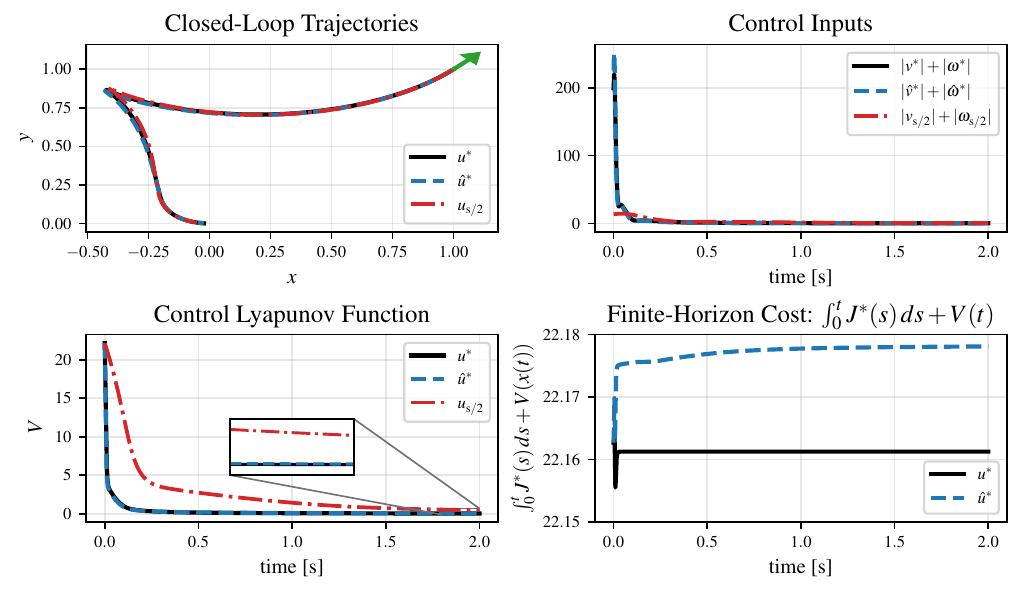}
    \caption{Closed-loop trajectories of the unicycle system \eqref{eq:unicycle-system} from the initial condition $(x,y,\theta)=(1,1,\pi/4)$ under three control laws: the inverse-optimal controller of Example \ref{ex1} with $\alpha=10$ and $\kappa$ computed to numerical precision at each time step, its neural-operator approximation, and the nominal controller \eqref{eq:nominal-1}, \eqref{eq:nominal-2}. The green arrow indicates the initial heading.}
    \label{fig:main-numerical}
\end{figure*}

\subsection{Learning a neural operator approximation of the expander mapping}
As mentioned in Section \ref{sec:lipschitz-approx}, computing $\kappa$ requires inverting $\Theta$ and hence will require a root-finding step at every point of the closed-loop simulation making deployment expensive. Hence, in the section, we learn an approximation of the expander map $\gamma \mapsto \kappa$, once offline, and readily deploy in online capitalizing on the fast inference of neural networks. To do so, consider the $\gamma $ family in Example \ref{ex1} and generate $200$ input output pairs $(\gamma(\cdot), \kappa(\cdot))$ where $\alpha$ is generated log-uniformly from $[10^{-4}, 10^4]$. The pairs for $\kappa$ are generated numerically on a grid of $1024$ points evenly spaced from $[0, 25]$ via the bisection method to machine precision. 

For this study, we deploy the Fourier Neural Operator (FNO) \cite{li2021Fourier} which takes the concatenated input $[\gamma(s),s]\in\mathbb{R}^{1024\times 2}$ to $\hat{\kappa}(s)\in\mathbb{R}^{1024}$. It consists of two layers of $32$ neurons each with $16$ Fourier modes. 
After training, the model achieves relative mean squared error of $0.15$ on the validation set. At inference, notice that $s$ may not be in our numerical grid of $1024$ and hence, between $s$ values, we use a linear interpolation based on the nearest output point of the machine learning model. 

\subsection{Illustrative results}
In Figure \ref{fig:main-numerical}, we present closed-loop trajectories with three controllers:
\begin{enumerate}
    \item \emph{Exact inverse-optimal} ($u^\ast$): the half-direct-optimal law
    with $\kappa$ evaluated by bisection at every time step.
    \item \emph{FNO approximation} ($\hat{u}^\ast$): the same law with $\kappa$
    replaced by a single FNO forward pass with the aforementioned linear interpolation.
    \item \emph{Nominal} ($u_{\rm s/2}$): the quadratic-cost-optimal law
    
    \eqref{eq:nominal-1}--\eqref{eq:nominal-2}.
\end{enumerate}
All three controllers drive the unicycle to the origin. Moreover, the exact and FNO trajectories are visually indistinguishable, confirming that the
neural-operator approximation faithfully replicates the expander mapping in closed loop. As expected, 
the CLF $V$ decreases monotonically to zero under all three laws, verifying stability. However, notice the inverse-optimal design yields a more aggressive transient requiring both a large control cost, but faster CLF convergence. Moreover, this is reflected in the finite-horizon $J^\ast$ as one can see that the exact inverse-optimal design is indeed a better minimizer of the induced cost function rather then the learned $\hat{u}^\ast$. 

\section{Conclusions}

We have introduced a functional parametrization of half-direct-optimal stabilizing feedback laws through the control cost, 
for nonlinear control-affine systems. The parametrization is indexed by a user-chosen function
$\gamma$ serving as the control penalty in an infinite-horizon cost functional. From each
admissible $\gamma$, a nonlinear expander $\kappa$ is constructed through a three-step
procedure, and the resulting feedback law is both globally asymptotically stabilizing and
inverse optimal by design. The cost precedes the controller --- in contrast to the reverse, where, with controller-space parametrizations,  inverse optimality is recovered a posteriori.
The classical Sontag formula is recovered as a special case, and the Freeman-Kokotovic
minimum-norm formula serves as the nominal law that the expander reshapes.

The map $\gamma \mapsto \kappa$ is Lipschitz on compact families of cost functions, and this
regularity guarantees uniform approximation of the entire operator by a neural network.
Under the learned approximation, semiglobal practical asymptotic stability is preserved, and
the induced suboptimality is second order in the approximation error --- a quadratic penalty
for a linear approximation tolerance.

Taken together, these results reframe a classical existence-theoretic construction as a
practical design methodology: the practitioner selects a family of cost functions encoding
different preferences over control aggressiveness, and obtains a corresponding family of
certified stabilizing laws from which to choose on the basis of closed-loop performance.
Numerical results illustrating the learning and quantification of the neural expander operator,
and its deployment in unicycle stabilization simulations, are provided in Section~\ref{sec:num}.

\section*{Dedication}

This paper is dedicated to Eduardo Sontag on the occasion of his 75th birthday in 2026.

\bibliography{references}
\bibliographystyle{elsarticle-num.bst}

\end{document}